\documentclass[a4paper,UKenglish]{notEurocg}
\usepackage{thmtools}
\usepackage{amsthm}
\usepackage{hyperref}
\usepackage[capitalise,nameinlink]{cleveref}
\usepackage{todonotes}
\newtheorem{lem}[theorem]{Lemma} %
\newtheorem{claim}{Claim}

\usepackage{paralist}

\newenvironment{claimproof}{\noindent\textsf{\bfseries\color{gray} Proof.}}{\hfill$\vartriangleleft$

\medskip}

\definecolor{defblue}{rgb}{0.121,0.47,0.705}
\let\emph\relax\DeclareTextFontCommand{\emph}{\color{defblue}\em}

\title{Grounded String Representations \\ of~Series-Parallel Graphs~without~Transitive~Edges\thanks{This research was initiated at the Bertinoro Workshop on Graph Drawing 2025. The second author acknowledges support of Czech Science Foundation by Research grant No. GA\v{C}R 23-04949X.}}

\titlerunning{Grounded String Representations 
  of~Series-Parallel~Graphs}

\author[1]{Sabine Cornelsen}
\affil[1]{University of Konstanz \\ \texttt{sabine.cornelsen@uni-konstanz.de}}
\author[2]{Jan Kratochv\'{\i}l}
\affil[2]{Faculty of Mathematics and Physics, Charles University \\
	\texttt{honza@kam.mff.cuni.cz}} %
\author[3]{Miriam M\"unch}
\affil[3]{University of Passau \\ \texttt{muenchm@fim.uni-passau.de}}%
\author[4]{Giacomo~Ortali}
\affil[4]{Università Telematica Internazionale Uninettuno \\ 	\texttt{giacomo.ortali@uninettunouniversity.net}}
\author[5]{Alexandra Weinberger}
\affil[5]{FernUniversit\"at in Hagen \\ %
\texttt{alexandra.weinberger@fernuni-hagen.de}}%
\author[6]{Alexander Wolff}
\affil[6]{Universit\"at W\"urzburg} %

\authorrunning{S.~Cornelsen, J.~Kratochv\'{\i}l, M.~M\"unch, G.~Ortali, A.~Weinberger, and A.~Wolff}

\newcommand{\eL}{\ensuremath{\mathsf{L}}}
\newcommand{\Le}{\scalebox{-1}[1]{\eL}}
\newcommand{\eLLe}{\eL-\Le}
\newcommand{\fe}{\mathrm{fe}}
\newcommand{\curve}{\eL}

\graphicspath{{./finalFigures/}}

\begin{document}

\maketitle

\begin{abstract}
  In a {\em grounded string representation} of a graph there is a horizontal
  line $\ell$ and each vertex is represented as a simple curve below
  $\ell$ with one end point on $\ell$ such that two curves intersect
  if and only if the respective vertices are adjacent. A grounded
  string representation is a {\em grounded
  \eLLe-representation} if each vertex is represented
  by a 1-bend orthogonal polyline. It is a {\em grounded \eL-representation}
  if in addition all curves are \eL-shaped.  We show that every
  biconnected series-parallel graph without edges between the two
  vertices of a separation pair (i.e., {\em transitive edges}) admits
  a grounded \eLLe-representation if and only if it
  admits a grounded string representation. Moreover, we can test in
  linear time whether such a representation exists. We also construct
  a biconnected series-parallel graph without transitive edges that
  admits a grounded \eLLe-representation, but no
  grounded \eL-representation.
\end{abstract}

\section{Introduction}

In this paper, we are interested in intersection representations of
graphs where each vertex is represented by an orthogonal polyline with one bend.
In particular, we consider the grounded case where all polylines start at a horizontal line and are drawn in the half-plane below that line. 
More precisely,  in a \emph{string representation}~\cite{ehrlichET76,sinden66} of a graph~$G$, each vertex~$v$ of~$G$ is represented by a simple curve $\curve(v)$ and the curves of two vertices $v$ and $w$ intersect if and only if $v$ and $w$ are adjacent. 
A string representation is a \emph{1-string representation} if any two curves intersect at most once.
In a \emph{grounded string representation} there is a horizontal line~$\ell$ such that the curve of any vertex $v$ is below $\ell$ and has one endpoint on $\ell$.  
This point is called the \emph{anchor} $\ell(v)$ of~$v$. 
A graph admits a grounded string representation if and only it admits an \emph{outerstring representation}~\cite{kratochvil:82} where each curve is drawn within a disk and the anchors are drawn on the boundary of this disk.
A graph is \emph{outerstring} if it has a grounded string representation and  \emph{outer-1-string} if it has a grounded string representation that is 1-string. 
While the computational complexity of recognizing outer-1-string graphs is open, recognizing  outerstring graphs is NP-hard~\cite{rokWalczak:2019}.

Curves with a low complexity are of particular interest. E.g., in \cite{antic_etal:gd25}, monotone curves with few bends and slopes $\pm 1$ were studied. 
These are special rotated orthogonal curves.
A string representation is \emph{orthogonal} if each curve is an orthogonal polyline, i.e., a sequence of vertical and horizontal segments.  An orthogonal string representation is an
\emph{\eLLe-representation} if each curve has at most one bend and this point is a bottommost point of the curve.  If a graph has a (grounded) \eLLe-representation, then it has a (grounded) string representation in which every curve is a straight-line segment \cite{mp-mcpcs-92}.
An orthogonal representation is an \emph{\eL-representation} if every curve has exactly one bend and this bend is the leftmost and bottommost point of the curve. 
Every planar graph admits an \eL-representation~\cite{goncalvesIP:soda18}.

Jel\'{\i}nek and T\"opfer~\cite{jelinek/toepfer:2019} discussed a hierarchy of grounded string representations. In particular, they showed that not every graph that admits a grounded \eLLe-representation also admits a grounded \eL-representation.
Outerplanar graphs,  interval graphs, and complete graphs always admit a grounded \eL-representation \cite{jelinek/toepfer:2019}.
Monotone \eL-representations, where each curve is anchored at its bend on a diagonal line, can always be found for outerplanar graphs, interval graphs,  2D-ray graphs, convex bipartite graphs, and complete graphs~\cite{ahmed:arxiv17,catanzaro_etal:dam17}.

A graph together with an ordering of the vertices is an \emph{ordered graph}.
In a \emph{constrained grounded string representation}~\cite{sinden66} of an ordered graph,
the anchors are in the order prescribed by the input order. Given an ordered graph with $n$ vertices and $m$ edges, it can be decided in $\mathcal O(n^4)$ time
whether it admits a constrained grounded \eL-representation~\cite{jelinek/toepfer:2019}, in $\mathcal O(m + n \log n)$ time whether it has a constrained monotone \eL-representation~\cite{ahmed:arxiv17}, and in
$\mathcal O(n^2)$ time whether it admits a constrained grounded \eLLe-representation~\cite{biedl_et_al:gd24}.
In general, one bend per curve (assuming orthogonal curves) 
does not suffice for ordered trees or cycles with a constrained grounded 1-string representation, but two bends per curve are always sufficient and such a representation can be computed in linear time~\cite{biedl_et_al:gd24}.

\subparagraph*{Our Results.}  
We first give a necessary condition for biconnected outerstring graphs in terms of components of a separation pair; see \cref{sec:necessary}. 
In \cref{thm:nonTransitiveFewHeavySuf}, we then demonstrate that these conditions even guarantee grounded \eLLe-representations for biconnected series-parallel graphs without transitive edges; where 
a biconnected graph is \emph{series-parallel} if and only if it does not contain a subdivision of the complete graph $K_4$ with four vertices~\cite{duffin:65}, and a \emph{transitive edge} is an edge between the two vertices of a separation pair. 
In this case, the necessary and sufficient condition can be tested in linear time; see \cref{cor:algo}.
Finally, we give an example of a biconnected series-parallel graph without transitive edges that admits a grounded \eLLe-representation, but no grounded \eL-representation; see \cref{thm:noL}.
\textit{For full proofs, we refer to the appendix.}

\renewcommand{\bottomfraction}{.99}

\section{A Necessary Condition for Biconnected Outerstring Graphs}\label{sec:necessary}

We give a necessary condition for biconnected outerstring graphs in terms of components of a separation pair. 
For a graph $G$ and a vertex~$s$ of~$G$, we let $G-s$ denote the graph
obtained from $G$ by removing $s$ and its incident edges.  Given a
biconnected graph~$G$, a \emph{separation pair} is a pair~$(s,t)$
of vertices of $G$ such that $G-s-t$ is disconnected. A
\emph{component}~$G'$ of $G$ w.r.t.\ $(s,t)$ is the subgraph induced
by $s$, $t$, and a connected component of $G-s-t$. The vertices $s$
and $t$ are the \emph{poles} of $G'$; the other vertices of $G'$ are
\emph{internal vertices}.
A component $G'$ w.r.t.\ a separation pair $(s,t)$ is \emph{heavy} if it contains an internal vertex that is neither adjacent to~$s$ nor to~$t$; otherwise $G'$ is \emph{light}.  A component~$G'$ is \emph{critical} if 
\begin{enumerate}
\item it contains an internal vertex $v$ that is neither adjacent
  to~$s$ nor to~$t$ (in particular $G'$ is heavy), and
\item there is a path $s-p_s-v-p_t-t$ in~$G'$ such that no internal
  vertex in the subpath $p_s$ is adjacent to $t$ and no internal
  vertex in the subpath $p_t$ is adjacent to $s$.
\end{enumerate}

A biconnected outer-1-string graph
can have many heavy components w.r.t.\ some separation pair; see \cref{fig:many}. However, only few of them can be critical.

\begin{figure}[tb]
	\begin{minipage}[t]{0.465\linewidth}
		\centering
		\includegraphics[page=1]{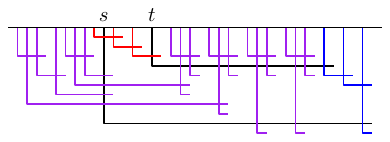}\\[1ex]
                \includegraphics[page=1]{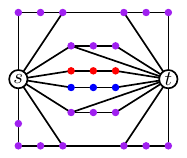}
		\subcaption{poles not adjacent}
		\label{fig:many_nontransitive}
	\end{minipage}
    \hfill
	\begin{minipage}[t]{0.5\linewidth}
		\centering
		\includegraphics[page=2]{heavy}\\[1ex]
                \includegraphics[page=2]{node-link.pdf}
		\subcaption{poles adjacent}
		\label{fig:many_transitive}
        \end{minipage}
	\caption{There can be many heavy components w.r.t.\
      a separation pair $(s,t)$ in a grounded \eL-representation; 
      the purple components are not critical.}
    \label{fig:many}
\end{figure}

\begin{lem}\label{lem:fewHeavyNec}
	Let $(s,t)$ be a separation pair of a biconnected outerstring graph $G$.
	\begin{enumerate}
	\item If $s$ and $t$ are not adjacent, then at most two components w.r.t.\ $(s,t)$ are critical. 
	\item If $s$ and $t$ are adjacent and the curves of $s$ and $t$ cross once, then at most four components w.r.t.\ $(s,t)$ are critical.
	\end{enumerate}
\end{lem}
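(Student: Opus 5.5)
Fix a grounded string representation of $G$ with grounding line $\ell$ and consider the curves $\curve(s)$ and $\curve(t)$. The plan is to show that every critical component must have a curve that reaches a particular region determined by these two curves, and that only a bounded number of such regions exist.

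\textbf{Case 1: $s$ and $t$ are not adjacent.} Then $\curve(s)$ and $\curve(t)$ are disjoint. Together with $\ell$, they divide the lower half-plane in a controlled way. There is one region $R$ "between" them, bounded by $\curve(s)$, $\curve(t)$, and the segment of $\ell$ between the anchors. Its complement splits into the part to the left of $\ell(s)$ and the part to the right of $\ell(t)$; these are connected to each other below both curves.

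Now take a critical component $G'$ with witness vertex $v$ and path $s-p_s-v-p_t-t$. Distinct components have pairwise disjoint internal curves, because no internal vertex of one is adjacent to an internal vertex of another. The union $\curve(p_s)\cup\curve(v)\cup\curve(p_t)$ is a connected set touching $\curve(s)$ and $\curve(t)$. It therefore contains a "bridge" curve $\gamma_{G'}$ from $\curve(s)$ to $\curve(t)$. Together with pieces of $\curve(s)$, $\curve(t)$ and $\ell$, this bridge forms a closed curve $C_{G'}$.

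The key step uses the anchor of $v$ and the conditions on $p_s$ and $p_t$. The curve $\curve(v)$ touches neither $\curve(s)$ nor $\curve(t)$, yet it must reach $\ell$ at $\ell(v)$. The curves of the other components cannot cross $\gamma_{G'}$. The bridges of different critical components are disjoint arcs between the same two disjoint curves $\curve(s),\curve(t)$, so they are nested or ordered. I would argue that if there were three critical components, the middle one's vertex $v$ would be enclosed between the two outer bridges and the curves of $s$ and $t$. Such a region avoids $\ell$ except possibly through $\curve(s)$, $\curve(t)$, or the other bridges, all of which $\curve(v)$ may not cross. That contradicts $\curve(v)$ being grounded.

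The extra conditions in the definition of critical are used here. No internal vertex of $p_s$ is adjacent to $t$, so the part of the bridge coming from $p_s$ does not touch $\curve(t)$, and symmetrically for $p_t$. This makes the bridge attach to $\curve(s)$ only via the $p_s$-side and to $\curve(t)$ only via the $p_t$-side, with $\curve(v)$ strictly in between. Hence the sub-arc of $\gamma_{G'}$ through $v$ genuinely separates. There are essentially two "sides" from which a bridge can leave room for its own $v$ to reach $\ell$: the side facing the region between the anchors, and the outer side. Each side can host at most one critical component, which gives the bound of two.

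\textbf{Case 2: $s$ and $t$ are adjacent and their curves cross exactly once.} The crossing point $x$ splits $\curve(s)$ and $\curve(t)$ into pieces. Together with $\ell$, these pieces bound a bounded face $F_0$ and the outer part, and the pieces beyond $x$ create additional "sides". I would repeat the argument from Case 1, now with four possible access routes to $\ell$ instead of two, and conclude that at most four components are critical.

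\textbf{Main obstacle.} The hardest part is the planar topology: rigorously showing that at most one critical component can use each side. The curves are arbitrary strings, so bridges may wind around. I would handle this by replacing each bridge with a simple sub-arc and applying the Jordan curve theorem to the closed curves formed by consecutive bridges. The argument would then order the bridges by where their endpoints lie along $\curve(s)$.
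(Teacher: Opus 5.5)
Your proposal is correct and follows the paper's proof. The paper likewise builds from $p_s$, $\curve(v)$ and $p_t$ a 3-star joining $\curve(s)$, $\curve(t)$ and the anchor of $v$, and uses the second criticality condition exactly as you do: the $p_s$-part avoids $\curve(t)$ and the $p_t$-part avoids $\curve(s)$. It then notes that stars of different components are disjoint and bounds how many disjoint stars fit; your ``middle bridge is enclosed'' argument is a concrete justification of its bound of two. For the crossing case the paper also only asserts the bound of four from a figure, and there one face can contain three such stars, so a literal ``one per side'' repetition of Case~1 is not what does the counting, but your sketch is no less complete than the paper's.
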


\begin{figure}[htb]
    \centering
	\begin{minipage}[b]{0.2\linewidth}
		\centering
		\includegraphics[page=3]{heavy}
		\subcaption{curves}
		\label{fig:boundCurves}
	\end{minipage}
    \hfil
	\begin{minipage}[b]{0.25\linewidth}
		\centering
		\includegraphics[page=4]{heavy}
		\subcaption{poles not adjacent}
		\label{fig:two_nontransitive}
    \end{minipage}
    \hfil
	\begin{minipage}[b]{0.2\linewidth}
          \centering
          \includegraphics[page=5]{heavy}
          \subcaption{poles adjacent}
          \label{fig:four_transitive}
        \end{minipage}
	\caption{Illustration of \cref{lem:fewHeavyNec}.}
\end{figure}

\begin{proof}
    Consider an outerstring representation of $G$.
	Let $G'$ be a critical component w.r.t.\ $(s,t)$, and let $v$ be a vertex in $G'$ that is neither adjacent to $t$ nor to $s$, together with an $s$-$t$-path $s-p_s-v-p_t-t$ such that no vertex in $p_s$ is adjacent to $t$ and no vertex in $p_t$ is adjacent to $s$. In the representation of the vertices of~$p_s$, consider the parts between the intersection points between the curves of any two consecutive vertices. This yields a simple curve $\sigma_s$. Define $\sigma_t$ analogously for $p_t$. By the choice of $p_s$ and $p_t$,  $\sigma_t$ must not intersect the curve of $s$ and $\sigma_s$ must not intersect the curve of $t$; see \cref{fig:boundCurves}. 	
	The combination of $\sigma_s$, $\sigma_t$, and the representation of $v$ contains a 3-star connecting the curve of $t$, the curve of $s$, and the boundary at the anchor of $v$; see the thick yellow curve in \cref{fig:boundCurves}. If $s$ and $t$ are not adjacent, there can be at most two such 3-stars with the property that they are disjoint; see \cref{fig:two_nontransitive}. If the curves of $s$ and $t$ cross exactly once, then there can be at most four such 3-stars with the property that they are disjoint; see \cref{fig:four_transitive}.
\end{proof}

\section{Series-Parallel Graphs without Transitive Edges}\label{sec:noTransitive}

In this section, we show that the necessary condition in \cref{lem:fewHeavyNec} is also sufficient for biconnected series-parallel graphs without transitive edges and that in this case the condition can be tested in linear time. If there are no transitive edges, then the second condition in the definition of critical components is obsolete, and we obtain the following lemma.
\begin{restatable}[{\hyperref[lem:heavyIsCriticalA]{$\star$}}]{lem}{heavyIsCritical}
\label{lem:heavyIsCritical}
	A component of a biconnected series-parallel graph without transitive edges is critical if and only if it is heavy.
\end{restatable}
The separation pair $(s,t)$ is \emph{$k$-heavy} if it has $k$ heavy components. A graph~$G$ is \emph{$k$-heavy} if $k$ is the smallest integer such that every separation pair of $G$ is at most $k$-heavy.

\begin{theorem}\label{thm:nonTransitiveFewHeavySuf}
	If $G$ is a biconnected series-parallel graph without transitive edges, then the following statements are equivalent.
\begin{enumerate}
	\item\label{claim:notbad} $G$ is at most 2-heavy.
	\item\label{claim:outer-1} $G$ has an outerstring representation.
	\item\label{claim:groundedL} $G$ has a grounded \eLLe-representation.
\end{enumerate}
\end{theorem}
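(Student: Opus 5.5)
We prove the cycle of implications \ref{claim:groundedL}$\Rightarrow$\ref{claim:outer-1}$\Rightarrow$\ref{claim:notbad}$\Rightarrow$\ref{claim:groundedL}.

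The first implication is trivial: a grounded \eLLe-representation is in particular a grounded string representation, hence an outerstring representation.

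For \ref{claim:outer-1}$\Rightarrow$\ref{claim:notbad}, let $(s,t)$ be any separation pair. Since $G$ has no transitive edges, $s$ and $t$ are not adjacent. By the first part of \cref{lem:fewHeavyNec}, at most two components w.r.t.\ $(s,t)$ are critical. By \cref{lem:heavyIsCritical}, critical and heavy coincide here. Hence $(s,t)$ is at most 2-heavy, and so is $G$.

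The real work is \ref{claim:notbad}$\Rightarrow$\ref{claim:groundedL}. I would argue by structural induction over the SPQR-tree (equivalently, the series-parallel decomposition tree) of $G$, rooted at a suitable node, and prove a stronger, \emph{pole-aware} statement for every component $H$ with poles $s,t$. The claim is that $H$ has a grounded \eLLe-representation in which the curves of $s$ and $t$ sit in prescribed ``boundary'' positions. Concretely:
\begin{itemize}
\item the anchor of $s$ is leftmost and the anchor of $t$ is rightmost;
\item the curve of $s$ is an \Le{} reaching far to the right below everything, or the curve of $t$ is an \eL{} reaching far left, or both;
\item the internal part of $H$ lies in the region bounded by $\ell$ and these two curves.
\end{itemize}
How many pole curves must be ``exposed'' (long horizontal legs spanning the whole drawing) is governed by whether $H$ is heavy or light. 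The light case is easy: every internal vertex is adjacent to $s$ or $t$, so the internal vertices can be drawn as short curves crossed by the legs of $s$ and $t$.

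For a \emph{series composition} $H=H_1\cdot H_2$ with cut vertex $c$, I place the drawings side by side. I then extend the curve of $c$ so that it is the right pole of $H_1$ and the left pole of $H_2$, using the freedom to choose \eL{} vs.\ \Le{} for $c$.

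For a \emph{parallel composition} of components $H_1,\dots,H_k$ at $(s,t)$:
\begin{itemize}
\item at most two of them are heavy, since $G$ is at most 2-heavy;
\item there is no edge $st$, because $G$ has no transitive edges;
\item light components can be stacked arbitrarily between the legs of $s$ and $t$;
\item the (at most two) heavy components are placed at the two extreme sides, one attached ``near'' $s$'s anchor and one ``near'' $t$'s anchor, mirroring \cref{fig:two_nontransitive}.
\end{itemize}
Choosing the root at a separation pair that is $2$-heavy (if one exists) should make the invariant consistent.

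The main obstacle I expect is formulating the invariant precisely so that it survives both compositions. A heavy child placed at an extreme side must tolerate having one pole's curve intersect it only from outside. Consequently, the induction likely needs several representation types for a component (e.g.\ ``both poles exposed'', ``only $s$ exposed'', ``only $t$ exposed''), and a case analysis showing which types are achievable depending on the heaviness of the subcomponents at the children's separation pairs. I would first try to prove that every component admits the ``one pole exposed'' type, for either choice of pole. The intended reason is that heaviness deeper in the tree only ever requires one side, because every separation pair is at most 2-heavy.
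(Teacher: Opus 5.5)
Your arguments for (3)$\Rightarrow$(2) and (2)$\Rightarrow$(1) are correct and match the paper. The gap is (1)$\Rightarrow$(3), which carries all the content of the theorem. You never fix the invariant, never verify the series and parallel steps, and you say yourself that finding a stable invariant is the unresolved obstacle.

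The invariant you do write down already fails at a parallel node with two heavy children. That is exactly what happens at your proposed root, e.g.\ for $C_8$ with $s,t$ at distance four. Suppose $\curve(s)$ and $\curve(t)$ are disjoint and both heavy children have all internal anchors between $\ell(s)$ and $\ell(t)$. Take their 3-stars from the proof of \cref{lem:fewHeavyNec}. Consider the closed curve formed by the segment of $\ell$ from the left star's anchor to $\ell(t)$, the initial piece of $\curve(t)$, and the left star's path from $\curve(t)$ back to its anchor. It encloses the right star's anchor but not $\curve(s)$, and the right star cannot cross it. So one heavy child must wrap around the outside of the poles, and that is a representation type you have not defined.

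The series step has a similar problem: the cut vertex has a single horizontal leg, so it can be ``exposed'' to at most one side. (With the paper's conventions an \eL{} has its leg pointing right and an \Le{} pointing left, so your pole shapes are also swapped.)

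Finally, ``the light case is easy'' ignores edges between internal vertices. A light component can be a vertex $t'$ adjacent to $t$ plus an independent set of common neighbours of $s$ and $t'$. Short curves crossed only by the pole legs do not realize the edges to $t'$. That light components have only such rigid shapes is itself a lemma, and it needs series-parallelity and the absence of transitive edges (\cref{claim:nonheavylooklike}).

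What you are missing is the global structure that makes the construction straightforward.
\begin{itemize}
\item Choose a separation pair $(s,t)$ of maximum heaviness. Take longest $s$-$t$-paths $p_1,p_2$ in two of its components, the heavy ones if there are any.
\item Using $K_4$-subdivision arguments and the absence of transitive edges, one shows that $p_1$ and $p_2$ are induced.
\item Every component of $G-p_1-p_2$ attaches to exactly two vertices $u,v$ of one $p_i$, and these attachment pairs are nested.
\item Each such component is light w.r.t.\ $(u,v)$. Otherwise the maximality of $p_i$ forces the subpath $p_{uv}$ to have length at least four, so, being induced, it yields a second heavy component at $(u,v)$. Since $(s,t)$ was chosen of maximum heaviness, the side containing $G_{3-i}$ is heavy as well, so $(u,v)$ would be 3-heavy.
\end{itemize}
Hence $G$ is the cycle $p_1\cup p_2$ with nested light ears of rigid shape. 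The paper draws this cycle once and inserts the ears from the inside out, using the horizontal part of one attachment vertex and the vertical part of the other. A decomposition-tree induction might be made to work, but only after you specify the representation types and check every composition case, and that is precisely what the proposal leaves open.
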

\begin{proof}[Proof Sketch](3) implies (2). If there are no transitive edges, then
  there is no edge between any two poles.  Thus, by
  Lemmas~\ref{lem:fewHeavyNec} and~\ref{lem:heavyIsCritical}, if $G$
  is outerstring, then $G$ is at most 2-heavy.  It remains to show that
  (\ref{claim:notbad}) implies (\ref{claim:groundedL}). %

  Let $(s,t)$ be a $k$-heavy separation pair with maximum $k$, and
  let~$G_1$ and~$G_2$ be two components of $(s,t)$ such that $G_1$ is
  heavy if $k \geq 1$, and $G_2$ is heavy if $k=2$.
For $i \in \{1,2\}$, let~$p_i$ be a longest $s$-$t$-path in~$G_i$. The paths $p_1$ and $p_2$ are induced paths (\cref{claim:induced}). 
The concatenation of $p_1$ and the reverse of $p_2$ is the \emph{heavy cycle} $c$. We draw $c$ as shown in \cref{fig:heavyCycle}.

\begin{figure}[b]
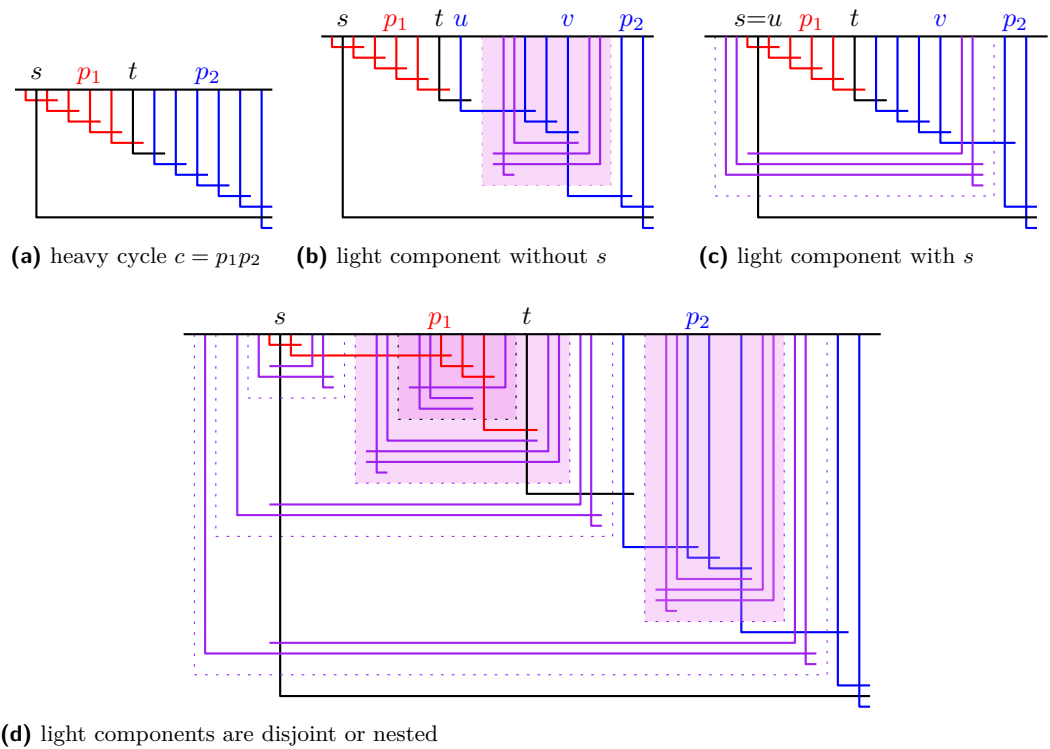

	\centering
	\begin{minipage}[b]{0.25\linewidth}
	\centering
	\includegraphics[page=6]{heavy}
	\subcaption{heavy cycle $c=p_1p_2$}
	\label{fig:heavyCycle}
        \end{minipage}
        \hfil
	\begin{minipage}[b]{0.36\linewidth}
	\centering
	\includegraphics[page=13]{heavy}
	\subcaption{light component without $s$}
	\label{fig:LightwithoutS}
        \end{minipage}
        \hfil
	\begin{minipage}[b]{0.334\linewidth}
	\centering
	\includegraphics[page=14]{heavy}
	\subcaption{light component with $s$}
	\label{fig:lightWithS}
        \end{minipage}

        \bigskip
        
	\begin{minipage}[b]{\linewidth}
	\centering
	\includegraphics[page=32]{heavy}
	\subcaption{light components are disjoint or nested}
	\label{fig:heavyCyclewithLight}
        \end{minipage}

        \caption{How to construct a grounded
          \eLLe-representation for a
          series-parallel graph without transitive edges if there is
          one. The black \eL{}s are the poles, the red and the blue \eL{}s
          represent a longest cycle through the poles, and the purple
          curves represent light components.}
        \label{fig:nonTransitiveL}
\end{figure}

In order to also insert the remainder of the graph, consider a
connected component $G'$ of $G-p_1-p_2$.  Then $G'$ is connected to
exactly two vertices~$u$ and~$v$ on~$p_1$ or on~$p_2$
(\cref{claim:twoinpi}) 
and $G'$ together with~$u$ and~$v$ is light
w.r.t.\ $(u,v)$ (\cref{claim:nonheavy}).
Moreover, in a series-parallel graph without transitive edges, a light component $G'$ w.r.t.\ a separation pair $(u,v)$ looks as follows (\cref{claim:nonheavylooklike}): %
$G'$ either contains a single vertex adjacent to both~$u$ and~$v$, or $G'$ contains an independent set $V_u$ of vertices adjacent to $u$ and an independent set $V_v$ of vertices adjacent to~$v$ such that~$V_v$ and~$V_u$ are disjoint and~$V_v$ or~$V_u$ contains exactly one vertex.
Moreover, the separation pairs
corresponding to these light components are nested along
$p_1$ and~$p_2$ (\cref{claim:nested}).

Now we draw the  components of $G-p_1-p_2$ inside out as shown in \cref{fig:LightwithoutS,fig:lightWithS}. We start with the components that are not adjacent to $s$. Let $p = c-s$.
 For two vertices $u$ and $v$ on $p$, we write $u < v$ if the vertical part of $u$ is to the left of the vertical part of $v$.
Let $G'$ and $G''$ be two components of $G-p_1-p_2$, and let $u<v$ and $u''<v''$ be the pairs of vertices on $p$ that are  adjacent to $G'$ and $G''$, respectively. We process $G'$ before $G''$ if 
\begin{enumerate}
\item $G'$ is nested inside $G''$, i.e., if $u'' \leq u$ and $v < v''$, or
\item $G'$ is to the left of $G''$, i.e., $v \leq u''$.
\end{enumerate}
When drawing a component $G'$ with poles $u<v$ on $p$, we use the horizontal part of $u$ and the vertical part of $v$ to connect the curves of $G'$. See \cref{fig:LightwithoutS}.

If $G'$ contains a single vertex $w$ adjacent to $u$ and $v$, we anchor $w$ immediately to the right of the anchor of $u$ and draw the horizontal part of $\curve(w)$ immediately above the horizontal part of $\curve(v)$ until it intersects the vertical part of $v$.
Otherwise, we anchor the vertices adjacent to $v$ immediately to the left of the anchor of the successor of $v$ on $p$ and draw them as~\Le. 
Then we anchor the vertices adjacent to~$u$ immediately to the right of the anchor of $u$ and draw them as~\eL. 
In any case, we draw the horizontal parts of the curves immediately above the horizontal part of $v$. 
The horizontal parts of the vertices adjacent to $v$ must be long enough such that they intersect the vertical parts of the vertices adjacent to $u$. 
 
 In the end, we draw the components that are adjacent to $s$. See \cref{fig:lightWithS}. We draw a component $G'$ before a component $G''$ if $v < v''$ where $v$ and $v''$ are the vertices on $p$ other than $s$ incident to $G'$ and $G''$, respectively. Observe that $v$ and $v''$ cannot
be incident to $s$ since otherwise $p_1$ or $p_2$ would not have been longest paths. 

When processing a component $G'$ with poles $s$ and $v$ on $p$, we use the vertical part of $s$ and the horizontal part of $v$. Let $v'$ be the successor of $v$ on $p$. 
If $G'$ contains a vertex $w$ adjacent to $s$ and $v$, then we anchor~$w$ immediately to the left of the anchor of $v'$ and draw the horizontal part of the~\Le\ representing $w$ immediately above the horizontal part of $\curve(v)$ until it reaches the vertical part of $\curve(s)$.
All other vertices are drawn as \eL{}s. We first anchor the vertices adjacent to $s$ to  the left of any anchor and then the vertices adjacent to
$v$ immediately to the left of $v'$.  We draw the horizontal parts of the curves immediately above the horizontal part of $v'$. The horizontal parts of the vertices adjacent to $s$ must be long enough such that they intersect the vertical parts of the vertices adjacent  to $v$.
\end{proof}

\begin{restatable}[{\hyperref[cor:algoA]{$\star$}}]{cor}{algo}
  \label{cor:algo}
  It can be tested in linear time whether a biconnected
  series-parallel graph without transitive edges is an outerstring
  graph, an outer-1-string graph, admits a grounded string
  representation with segments, or admits a grounded
  \eLLe-representation.
\end{restatable}
\begin{proof}[Proof Sketch]
	Since a grounded \eLLe-representation implies a grounded representation with segments, which in turn implies that the graph is outerstring, the four properties are fulfilled if and only if the graph is at most 2-heavy. 
    By the special structure of light components (see \cref{claim:nonheavylooklike}), a component is heavy if and only if a longest path between the poles has length at least four. Proceeding first bottom-up and then top-down in the decomposition tree, we can compute the length of a longest path.
\end{proof}

\begin{figure}[htb]
    \centering
    \includegraphics[page = 1]{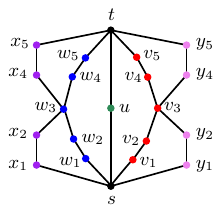} \hfil
    \includegraphics[page = 4]{noL.pdf}
    \caption{A series-parallel graph that admits a grounded \eLLe- but no grounded \eL-representation.}
    \label{fig:mirrored_needed}
\end{figure}

\begin{restatable}[{\hyperref[thm:noLA]{$\star$}}]{theorem}{noL}
\label{thm:noL}
    Graph $G$ in \cref{fig:mirrored_needed} is biconnected, series-parallel, and has no transitive edges. It admits a grounded \eLLe-representation but no grounded \eL-representation.
\end{restatable}
    
\begin{proof}[Proof Sketch]
  \cref{fig:mirrored_needed} depicts the
  \eLLe-representation.  In order to show
  that $G$ does not admit a grounded \eL-representation, we analyze
  three different cases, depending on the relative positions of
  $\curve(u)$, $\curve(s)$, and $\curve(t)$; see \cref{fig:3casesNoL}. 
  The curves $\curve(s)$, $\curve(t)$, and a part of $\curve(u)$ divide the area
  of the representation into a bounded (shaded) and an unbounded region. %
  We prove that
  $\{v_2,v_3,v_4\}$ has to be in one of them  and
  $\{w_2,w_3,w_4\}$ in the other.  For each case, we show
  that there are few ways to obtain a grounded \eL-representation of
  $G$ without the $x_i$ and $y_i$ and that for each of them, it is
  impossible to extend it to a representation of $G$.
  See \cref{fig:exampleNoLcompletion} for examples where $\curve(y_4)$ or
  $\curve(y_5)$ (left) and $\curve(x_1)$ or $\curve(x_2)$ (right) cannot be added.
\end{proof}

\begin{figure}[h]
        \centering
        \includegraphics[page = 23]{mirrored_needed-newR.pdf}
        \caption{The three cases for the proof of \cref{thm:noL}.}
        \label{fig:3casesNoL}
\end{figure}

\begin{figure}[h]
        \centering
        \includegraphics{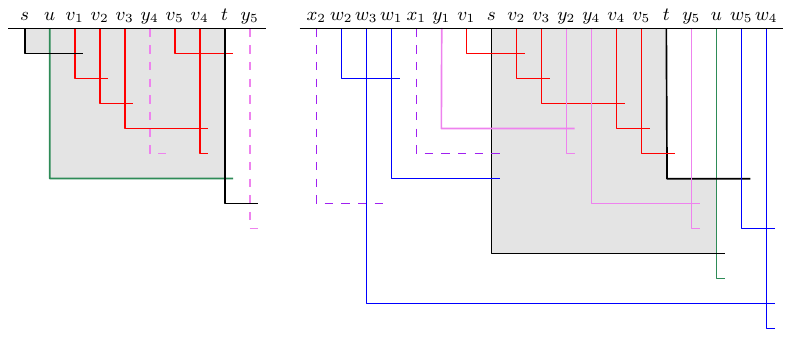}
        \caption{Representations that cannot be extended to a grounded \eL-representation of~$G$.}
        \label{fig:exampleNoLcompletion}
\end{figure}

\bibliographystyle{plainurl}
\bibliography{references}

\newpage

\appendix

\section{Missing Proofs for \cref{sec:noTransitive}}

\heavyIsCritical*\label{lem:heavyIsCriticalA}
\begin{proof}
	Let $G$ be a biconnected series-parallel graph without transitive edges and let $G'$ be a component with respect to some separation pair $(s,t)$.  It follows from the definition that every critical component is heavy. So assume now that $G'$ is heavy. Let $v$ be a vertex of $G'-s-t$ that is neither adjacent to $s$ nor $t$. Let $q$ be an $s$-$t$-path outside $G'$, let $e$ be the edge of $q$ incident to $s$, and let $e'$ be any edge incident to $v$. Since $G$ is biconnected, there must be a simple cycle $c$ containing $e$ and $e'$. Since $c$ contains an edge inside $G'$ and an edge outside $G'$, it follows that the cycle $c$ must also contain $t$  
    and, thus, an $s$-$t$-path through $G'$ containing $v$. Let $p$ be the part of $c$ in $G'$, let $p_s$ be the subpath of $p$ between $s$ and~$v$, and let $p_t$ be the subpath of $p$ between $v$ and~$t$. %
    We will show that no vertex of $p_s$ is adjacent to~$t$. Since the argument for $p_t$ is analogous, this implies that $G'$ is critical.
	
	Suppose that $p_s$ contains a vertex $w$ adjacent to $t$. Since there are no transitive edges, this implies that $(w,t)$ is not a separation pair. Thus, there is an $s$-$v$-path $p_s'$ that contains neither $w$ nor $t$.  Consider the intersection points of $p$ and $p_s'$ in the order in which they appear on $p_s'$. The first intersection point is $s$ which is before $w$ on $p$ and the last intersection point is $v$ which is after $w$ on $p$. So eventually there must be two consecutive intersection points  $u$ and $u'$ on $p_s'$ such that $u$ is before $w$ on $p$ and $u'$ is after $w$ on $p$.  Then $G'$ together with the $s$-$t$-path $q$ outside $G'$ contains a subdivision of a $K_4$ with vertices $u$, $u'$, $w$, and~$t$ (see \cref{fig:K4p1heavyIsCritical}), which yields the desired contradiction.
\end{proof}

\begin{figure}[htb]
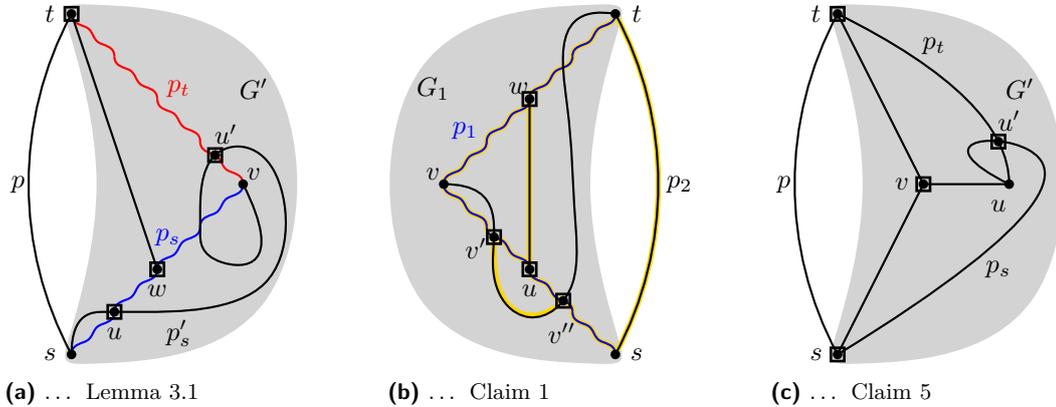

	\begin{minipage}[b]{0.28\linewidth}
		\centering
		\includegraphics[page=7]{heavy}
		\subcaption{\dots{} \cref{lem:heavyIsCritical}}
		\label{fig:K4p1heavyIsCritical}
	\end{minipage}
        \hfill
	\begin{minipage}[b]{0.28\linewidth}
		\centering
		\includegraphics[page=8]{heavy}
		\subcaption{\dots{} \cref{claim:induced}}
		\label{fig:K4p1induced}
	\end{minipage}
        \hfill
	\begin{minipage}[b]{0.28\linewidth}
		\centering
		\includegraphics[page=9]{heavy}
		\subcaption{\dots{} \cref{claim:bipolarOne}}
		\label{fig:bipolarOne}
	\end{minipage}
	\caption{Subdivision of a $K_4$ used in the proof of \dots}
    \label{fig:K4}
\end{figure}

In the following, we provide the claims and their proofs needed for the proof of  \cref{thm:nonTransitiveFewHeavySuf}.

\begin{claim}\label{claim:induced}
	The paths $p_1$ and $p_2$ are induced paths.
\end{claim}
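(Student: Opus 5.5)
We argue by contradiction, using the fact that $G$ contains no subdivision of $K_4$. Fix $i\in\{1,2\}$ and suppose $p_i$ is not induced. Then there are two vertices $a,b$ on $p_i$ that are not consecutive on $p_i$ but are adjacent in $G$; we may assume $a$ comes before $b$ when $p_i$ is traversed from $s$ to $t$. Since $G_i$ is an induced subgraph containing $p_i$, the edge $ab$ lies in $G_i$. We distinguish two cases according to whether $\{a,b\}=\{s,t\}$.

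\textbf{Case $\{a,b\}=\{s,t\}$.} Then $ab=st$ is an edge between the two vertices of the separation pair $(s,t)$, that is, a transitive edge. This is excluded by assumption.

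\textbf{Case $\{a,b\}\neq\{s,t\}$.} We aim to build a subdivision of $K_4$ with branch vertices $s$, $t$, and two suitable vertices of $p_i$.
\begin{itemize}
\item Let $p_i[a,b]$ denote the subpath of $p_i$ from $a$ to $b$. It has an internal vertex $x$, because $a$ and $b$ are not consecutive on $p_i$.
\item Replacing $p_i[a,b]$ by the edge $ab$ gives a strictly shorter $s$-$t$-path in $G_i$. This alone does not contradict the choice of $p_i$ as a longest path, so we need more structure.
\item The idea is that $G_i-\{a,b\}$ separates $p_i[a,b]$ from the rest. If it does not, there is a path $r$ in $G_i$ from an internal vertex $x$ of $p_i[a,b]$ to a vertex $y$ of $p_i$ outside $p_i[a,b]$ that avoids $a$ and $b$, and whose interior avoids $p_i$. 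Take $y$ on $p_i[s,a)$; the case $y\in p_i(b,t]$ is symmetric.
\item The $K_4$-subdivision then has branch vertices $y$, $a$, $x$, $b$. The six connecting paths are: $p_i[y,a]$, $p_i[a,x]$, $p_i[x,b]$, the edge $ab$, the path $r$ from $x$ to $y$, and a path from $y$ to $b$ leaving $p_i$ through $s$ (via $p_i[y,s]$), going around through the component $G_{3-i}$ to $t$, and then returning along $p_i[t,b]$.
\item These six paths are internally disjoint. This contradicts that $G$ is series-parallel.
\end{itemize}

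\textbf{Remaining subcase.} Hence $\{a,b\}$ separates the internal vertices of $p_i[a,b]$ from $s$ and $t$. So $(a,b)$ is a separation pair of $G$ whose two vertices are adjacent. The edge $ab$ is therefore a transitive edge, again contradicting the assumption. Consequently $p_i$ is induced.

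\textbf{Main obstacle.} The delicate step is the path routing in the second case.
\begin{itemize}
\item One must check carefully that $r$ can be chosen to end at the first vertex of $p_i$ it meets.
\item If $r$ re-enters $p_i$ inside $p_i[a,b]$, I will apply the argument to a bridge of $G_i-\{a,b\}$ containing $x$ instead, and take any attachment of that bridge outside $p_i[a,b]$.
\item The degenerate cases $y=s$ or $b=t$ must be handled so that the path from $y$ to $b$ through $G_{3-i}$ remains internally disjoint from the other five.
\end{itemize}
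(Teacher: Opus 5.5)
Your proof is correct and is essentially the paper's argument. Since $ab$ cannot be a transitive edge, $(a,b)$ is not a separation pair, and the resulting connecting path gives a $K_4$-subdivision on $a$, $b$ and two further vertices of $p_i$, closed up through $G_{3-i}$; the paper's $u,w,v',v''$ are your $a,b,x,y$. The routing worries you list go away if you take $x$ to be the last vertex of the connecting path in the interior of $p_i[a,b]$ and $y$ the next vertex of that path on $p_i$, exactly as the paper does.
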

\begin{claimproof}
	Assume that $p_1= s,\dots,u,\dots,v,\dots,w,\dots,t$ where $u$ and $w$ are adjacent. Since there are no transitive edges, it follows that $(u,w)$ cannot be a separation pair. Hence, there is a path $p$ in $G$ between $v$ and $t$ that does not contain $u$ nor $w$. Since $G_1$ is a component of $G$, it follows that $p$ is completely in $G_1$. Let $v'$ be the last vertex of $p$ on $p_1$ that is between $u$ and $w$ and let $v''$ be the first vertex of $p$ after $v'$ that is on $p_1$. Then $G$ contains a subdivision of a $K_4$ consisting of the vertices $u$, $w$, $v'$, and $v''$ and the edges on $p_1$, $p_2$, the part of $p$ between $v'$ and $v''$, and the edge between $u$ and $w$. See \cref{fig:K4p1induced}. This contradicts the fact that $G$ is series-parallel.  
	
	The proof for $p_2$ is analogous.
\end{claimproof}

\begin{claim}\label{claim:twoinpi}
	For $i=1,2$, every connected component of $G_i-p_i$ is adjacent to exactly two vertices of $p_i$.
\end{claim}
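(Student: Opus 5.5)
Let $H$ be a connected component of $G_i - p_i$, and let $A$ be the set of vertices of $p_i$ adjacent to $H$. We show $|A| \ge 2$ and $|A| \le 2$ separately.

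\emph{At least two.} Since $G_i$ is a component with respect to $(s,t)$, every internal vertex of $G_i$ lies in the same connected component of $G-s-t$. No vertex outside $G_i$ is adjacent to an internal vertex of $G_i$, so every neighbor of $H$ lies in $G_i$, hence in $p_i$. If $|A| \le 1$, then the single vertex of $A$ (or the empty set) separates $H$ from the rest of $G$. This is impossible because $G$ is biconnected and $p_2$ supplies vertices outside $G_i \cup H$. Hence $|A|\ge 2$.

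\emph{At most two.} Suppose $H$ is adjacent to three distinct vertices $a<b<c$ of $p_i$, ordered along $p_i$ from $s$ to $t$. I build a subdivision of $K_4$ with branch vertices $a$, $c$, $b$, and a vertex $h$ of $H$.
\begin{itemize}
\item Since $H$ is connected, pick $h\in H$ such that there are internally disjoint paths from $h$ to $a$, $b$, and $c$ through $H$. Such an $h$ exists: take a spanning tree of $H$ together with the three attachment edges; the minimal subtree connecting the three attachment points is a subdivided star or a path, and in either case it yields a branching point (possibly on an attachment edge endpoint).
\item The subpaths of $p_i$ give an $a$–$b$ path and a $b$–$c$ path.
\item An $a$–$c$ path avoiding $b$ is obtained by going along $p_i$ from $a$ back to $s$, then around $p_{3-i}$ (the other side of the heavy cycle) to $t$, then along $p_i$ back to $c$.
\end{itemize}
All six branch paths are internally disjoint, since $H$ is disjoint from $p_1\cup p_2$. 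This gives a $K_4$ subdivision and contradicts that $G$ is series-parallel.

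The step that needs care is the degenerate case where $H$ is a path and $h$ must be chosen as a branching point. If the three attachment vertices of $H$ coincide, $h$ is that vertex. Otherwise, the minimal subtree spanning the three attachment points has a degree-3 vertex or a middle leaf, and that vertex serves as $h$. With this choice, the three $H$-paths are internally disjoint.

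A smaller point is the boundary case $a=s$ or $c=t$. The construction still works, because the $a$–$c$ path then simply starts or ends directly on $p_{3-i}$.

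This shows every component of $G_i-p_i$ attaches to exactly two vertices of $p_i$. The longest-path choice of $p_i$ is not needed here; it is used later to show that these two vertices are not both poles, and to establish lightness.
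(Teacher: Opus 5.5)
Your proof is correct and takes the same route as the paper: biconnectivity rules out fewer than two attachment vertices, and three attachment vertices together with $p_{3-i}$ give a subdivision of $K_4$, with you supplying the details the paper's two-line argument leaves implicit (the median vertex $h$ and the six internally disjoint branch paths). The only slip is in the lower bound, where ``$p_2$'' should read $p_{3-i}$; alternatively, just note that one of $s,t$ lies outside $A\cup H$.
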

\begin{claimproof}
	Let $G'$ be a connected component of $G_i-p_i$.  If $G'$ were adjacent to less than two vertices of~$p_i$, then the graph would not be biconnected. If $G'$ were connected to more vertices of $p_i$, then $G_i$ together with $p_{3-i}$ would contain a subdivision of $K_4$. 
\end{claimproof}
	
\begin{claim}\label{claim:nested}
	For $i=1,2$, let $G'$ and $G''$ be two connected components of $G_i-p_i$. 
    Then the neighbors of $G'$ and $G''$ on $p_i$ are nested, i.e., the order on $p_i$ is either $s$, a neighbor of $G'$, a neighbor of $G''$, a neighbor of $G''$, a neighbor of $G'$, and then $t$, or  $s$, a neighbor of $G''$, a neighbor of $G'$, a neighbor of $G'$, a neighbor of $G''$, and then $t$ (where consecutive vertices may be equal).
\end{claim}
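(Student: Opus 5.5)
The plan is to show that the neighbor pairs $\{a,b\}$ of $G'$ and $\{c,d\}$ of $G''$ on $p_i$, with $a\le b$ and $c\le d$ in the order from $s$ to $t$, can only be placed in the nested way. I treat the two non-nested placements separately: the \emph{crossing} placement $a<c<b<d$ and the \emph{disjoint} placement $b\le c$ (up to swapping the names $G'$ and $G''$). By \cref{claim:twoinpi}, each of $G'$ and $G''$ has exactly two neighbors on $p_i$, and $a\ne b$, $c\ne d$ by biconnectivity.

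\textbf{Crossing.} Suppose $a<c<b<d$ on $p_i$. I take the branch vertices $a,c,b,d$ and route the six subdivided edges as follows:
\begin{itemize}
\item $a$–$c$, $c$–$b$ and $b$–$d$ along $p_i$;
\item $a$–$b$ through $G'$, which is connected and adjacent to both $a$ and $b$;
\item $c$–$d$ through $G''$;
\item $a$–$d$ along $p_i$ from $a$ back to $s$, then along $p_{3-i}$ from $s$ to $t$, then along $p_i$ from $t$ to $d$.
\end{itemize}
The first five paths are internally disjoint because $G'$ and $G''$ are distinct components of $G_i-p_i$. The sixth path is internally disjoint from the others because $p_{3-i}$ lies in the other component $G_{3-i}$. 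This gives a subdivision of $K_4$, contradicting that $G$ is series-parallel. If $a=s$ or $d=t$, the corresponding segment of the $a$–$d$ path is empty and the argument is unchanged.

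\textbf{Disjoint.} Suppose $b\le c$. This is the case I expect to be the main obstacle, because it yields no $K_4$ directly. My first approach is to use that $p_i$ is a \emph{longest} $s$-$t$-path in $G_i$. Replacing the segment $p_i[a,b]$ by a path through $G'$ must not lengthen $p_i$. Together with \cref{claim:induced} (so $a,b$ are nonadjacent and $|p_i[a,b]|\ge 2$) and the absence of transitive edges, this should pin down the shapes of $G'$ and $G''$, and the same holds for $G''$ on $[c,d]$.

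From these shapes I would then try to contradict the choice of $(s,t)$ as a separation pair with the maximum number $k$ of heavy components. Concretely, I would try to show that some pair, such as $(a,d)$ or $(b,c)$, has at least as many heavy components as $(s,t)$ while its components are strictly larger. This would require $(s,t)$ to be chosen extremal in a refined sense.

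I am not confident this succeeds. A cycle $s,x,y,z,t$ completed through $p_{3-i}$, with one extra common neighbor of $s$ and $y$ and one of $y$ and $t$, looks like a biconnected, transitive-edge-free series-parallel graph that realizes the disjoint placement $a=s$, $b=c=y$, $d=t$. So before writing the disjoint case I would check this example carefully against the claim. If it stands, the statement is only provable in the weaker form ``nested or disjoint'' (non-crossing). That weaker form is exactly what the drawing order in the proof of \cref{thm:nonTransitiveFewHeavySuf} uses, since that order processes components either inside out or from left to right.
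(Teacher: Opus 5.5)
Your crossing case is the paper's argument. The paper's entire proof is the single sentence that otherwise $G_i$ together with $p_{3-i}$ contains a subdivision of $K_4$. Your routing is a correct instance of it: the three segments of $p_i$ between $a,c,b,d$, one path through each of $G'$ and $G''$, and the detour from $a$ back to $s$ along $p_i$, through $p_{3-i}$, then from $t$ to $d$ along $p_i$. Your split into nested, crossing and disjoint placements is exhaustive, with the shared-endpoint cases $a=c$ or $b=d$ counted as nested, as the claim allows.

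Your doubt about the disjoint case is justified, and you should stop trying to exclude it: your example is a genuine counterexample to the claim as worded. Concretely:
\begin{itemize}
\item Let $G_1$ be the path $s,x,y,z,t$, plus a vertex $x'$ adjacent to $s$ and $y$ and a vertex $z'$ adjacent to $y$ and $t$. Let $G_2$ be the path $s,q,t$.
\item The resulting graph is biconnected and series-parallel.
\item Its only separation pairs are $\{s,t\}$, $\{s,y\}$, $\{y,t\}$ and $\{q,y\}$. None of them is an edge, so there are no transitive edges.
\item Each of these pairs has at most one heavy component, and $(s,t)$ has exactly one ($y$ is adjacent to neither pole). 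So $(s,t)$ with $k=1$, heavy $G_1$ and $G_2=s,q,t$ is an admissible choice in the proof of \cref{thm:nonTransitiveFewHeavySuf}.
\item Every $s$-$t$-path in $G_1$ has length four and uses one of $x,x'$ and one of $z,z'$. Hence for every choice of $p_1$, the two components of $G_1-p_1$ have neighbor pairs $\{s,y\}$ and $\{y,t\}$, which are disjoint, not nested.
\end{itemize}

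Since this graph contains no $K_4$-subdivision at all, the paper's one-line proof only covers crossing placements. What it, and your argument, actually proves is the non-crossing version: the neighbor pairs are nested or disjoint. That is also all the proof of \cref{thm:nonTransitiveFewHeavySuf} uses, since its drawing order handles components that are nested inside, or to the left of, one another. So drop the longest-path/extremality plan for the disjoint case, since it aims at a false statement. Restate the claim as non-crossing, and your crossing argument is then a complete proof of it.
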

\begin{claimproof}
	Otherwise, there is a subdivision of a $K_4$ contained in $G_i$ and $p_{3-i}$.
\end{claimproof}

The next claim implies that the heavy components are covered by $p_1$ and $p_2$.

\begin{claim}\label{claim:nonheavy}
	For $i=1,2$, every vertex in $G_i-p_i$ is adjacent to at least one vertex on $p_i$. %
\end{claim}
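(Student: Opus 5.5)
We argue by contradiction, exploiting that $p_i$ is a \emph{longest} $s$-$t$-path in $G_i$. Suppose some vertex $x$ of $G_i-p_i$ has no neighbor on $p_i$. Let $G'$ be the connected component of $G_i-p_i$ containing $x$. By \cref{claim:twoinpi}, $G'$ is attached to exactly two vertices $u$ and $v$ of $p_i$, with $u$ before $v$ on $p_i$. Thus $(u,v)$ is a separation pair, and $H=G'+u+v$ (the subgraph induced by $V(G')\cup\{u,v\}$) is a component with respect to $(u,v)$. We will show that $H$ contains a $u$-$v$-path through $x$ of length at least four. We will also show that the subpath $p_i[u,v]$ has length at most two. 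Replacing $p_i[u,v]$ by this longer path then yields an $s$-$t$-path in $G_i$ longer than $p_i$, a contradiction.

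\emph{Step 1: a long path through $x$.} Since $G$ is biconnected, we repeat the cycle argument from the proof of \cref{lem:heavyIsCritical} for $(u,v)$. The path $p_i[u,v]$ together with the rest of $c$ lies outside $G'$, so any simple cycle through an edge of $p_i$ at $u$ and an edge at $x$ must pass through $v$. This gives a $u$-$v$-path $q$ inside $H$ that passes through $x$. Because $x$ is adjacent to neither $u$ nor $v$, both subpaths $q[u,x]$ and $q[x,v]$ have length at least two, so $q$ has length at least four.

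\emph{Step 2: the subpath $p_i[u,v]$ is short.} By maximality of $p_i$, the subpath $p_i[u,v]$ must be at least as long as $q$, since otherwise the exchange already succeeds. So we may assume $p_i[u,v]$ has length at least four. We then want a contradiction. If $p_i[u,v]$ has length at least two, it has an interior vertex $w$. Now $(u,v)$ separates $w$ from $x$, and $u,v$ are non-adjacent because $p_i$ is induced (\cref{claim:induced}) and there are no transitive edges. The edge count alone does not lead anywhere here.

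The cleaner route is to use the exchange in both directions. Maximality forces $|p_i[u,v]|\ge |q|\ge 4$, and no contradiction follows from that alone. So instead of comparing lengths, we argue directly about where $x$ lies. We choose the heavy cycle more carefully: among all choices of $(s,t)$, $G_1$, $G_2$, we take the one where $p_1,p_2$ are longest. Equivalently, we use the fact that $(u,v)$ would have the components $H$ and the component containing $p_i[u,v]$ together with the rest of $G$. If $H$ were heavy, with $x$ as witness, we would compare against the choice of $(s,t)$ as a separation pair with maximum $k$.

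This is the step I expect to be the main obstacle. I would first try to show that $H$ is a \emph{critical} component for $(u,v)$ distinct from the one containing $s,t$. Together with the light/heavy count at the maximal pair $(s,t)$, this should contradict 2-heaviness or the maximality in the choice. Failing that, I would instead choose $p_i$ as a longest path that additionally maximizes the number of covered vertices. I would then show that rerouting $p_i$ through $q$ keeps the length while covering $x$, using \cref{claim:nested} to ensure that the rerouted path stays simple and induced.
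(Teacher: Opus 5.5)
Your Step~1 is fine and matches the paper. Biconnectivity gives a $u$-$v$-path $q$ through $x$ with all internal vertices in $G'$, and $|q|\ge 4$. The gap is Step~2. Its announced goal, $|p_i[u,v]|\le 2$, cannot be proved: you derive the opposite yourself, $|p_i[u,v]|\ge |q|\ge 4$, from the maximality of $p_i$. What follows is a list of alternatives that are never carried out. The last fallback, rerouting $p_i$ through $q$, also fails as stated. The path $q$ may be shorter than $p_i[u,v]$, and the claim must hold for the longest path $p_i$ already fixed in the proof of the theorem.

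The missing idea is that $|p_i[u,v]|\ge 4$ is the key input, not a dead end. The count must be done at the pair $(u,v)$, not at $(s,t)$.
\begin{itemize}
\item Since $p_i$ is induced (\cref{claim:induced}), the vertex at distance two from $u$ on $p_i[u,v]$ is adjacent to neither $u$ nor $v$. So the component of $(u,v)$ containing the interior of $p_i[u,v]$ is heavy.
\item This component is not the one containing the rest of $G$; your sentence merging them is where the argument goes wrong. The path $p_i$ has no chords, and by \cref{claim:nested} no component of $G_i-p_i$ attaches both strictly inside and strictly outside $p_i[u,v]$. Hence $\{u,v\}$ separates the interior of $p_i[u,v]$ from $G_{3-i}$.
\item So $(u,v)$ has at least three components. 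Two of them are heavy: $H$, with witness $x$, and the $p_i[u,v]$-component.
\item If the maximal $k$ used to choose $(s,t)$ is at most one, these two heavy components at $(u,v)$ already contradict that choice.
\item If $k=2$, then $G_{3-i}$ is heavy. Its witness is adjacent to neither $s$ nor $t$ and has no neighbours in $G_i$ other than possibly $s,t$, so it is adjacent to neither $u$ nor $v$. The component containing $G_{3-i}$ is therefore a third heavy component at $(u,v)$, contradicting that $G$ is at most 2-heavy.
\end{itemize}
This is exactly the paper's argument.
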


\begin{claimproof}
	Let $G'$ be a connected component of $G_i-p_i$. By \cref{claim:twoinpi}, $G'$ is adjacent to exactly two vertices $u, v$ on $p_i$. Assume that $G'$ contains a vertex $w$ that is not adjacent to any vertex on~$p_i$. 
	We first show that there is a $u$-$v$-path of length at least four whose internal vertices all belong to $G'$:
	Let $e$ be an edge incident to $w$ and let $e'$ be an edge of the subpath $p_{uv}$ of $p_i$ between $u$ and $v$. By biconnectivity, there must be a simple cycle containing $e$ and $e'$. This cycle~$c$ must contain $u$, $v$, $w$ and distinct neighbors $u'$ and $v'$ of $u$ and $v$, respectively, in $G'$. Thus, the internal vertices of the desired $u$-$v$-path are exactly the vertices on $c$ that belong to $G'$. 
		
	Observe that $(u,v)$ is a separation pair with at least three components, one containing~$G'$, one containing $p_{uv}$ and one containing $G_{3-i}$ (note that 
    $(u,v)$ separates $p_{uv}$ and $G_{3-i}$). Recall that if any separation pair had two heavy components, then we chose $G_1$ and $G_2$ to be the heavy components of $(s,t)$, i.e., in this case $G_{3-i}$ is heavy. Moreover, since $G'$ contains a vertex not adjacent to $p_i$ it follows that $G'$ is heavy by definition. Finally, since $p_i$ is a longest path and since $G'$ contains a $u$-$v$-path of length four, it follows that $p_{uv}$ has length at least four. Since $p_i$ is an induced path, it follows that $p_{uv}$ contains a vertex that is neither adjacent to $u$ nor $v$. It follows that the component containing $p_{uv}$ is a third heavy component w.r.t.\ $(u,v)$, a contradiction.
	So every vertex of $G'$ is adjacent to some vertex of~$p_i$.  %
\end{claimproof}

\begin{claim}\label{claim:bipolarOne}
	If a component 
	contains a vertex $v$ that is adjacent to both poles, then $v$ is the only internal vertex in this component.
\end{claim}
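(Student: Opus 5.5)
The plan is a contradiction argument producing a subdivision of $K_4$, in the same style as \cref{claim:induced} and \cref{lem:heavyIsCritical}. Let $G'$ be a component with respect to a separation pair $(s,t)$, and let $v$ be an internal vertex of $G'$ adjacent to both $s$ and $t$. Suppose, towards a contradiction, that $G'$ has a further internal vertex $w\neq v$. Since $G$ has no transitive edges, $s$ and $t$ are not adjacent. Since $(s,t)$ is a separation pair, there is also an $s$-$t$-path $q$ outside $G'$, and it has at least one internal vertex.

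The first step is to find an $s$-$t$-path through $G'$ that avoids $v$. Because $G'-s-t$ is connected and contains $w$, the vertex $w$ is joined to $v$ inside $G'-s-t$. I would look at $(s,t)$ together with the vertex $v$: the pair $(s,v)$ and the pair $(v,t)$ each span an edge, so by the hypothesis on transitive edges neither is a separation pair. From these two facts I want an $s$-$t$-path $r$ in $G'$ that does not pass through $v$.

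To get $r$, argue as follows. Since $(v,t)$ is not a separation pair, the graph $G-v-t$ is connected. Hence $w$ reaches $s$ in $G-v-t$. Any such path that leaves $G'$ must pass through $s$ or $t$, and $t$ is excluded. So there is a $w$-$s$-path inside $G'$ avoiding $v$ and $t$. Symmetrically, since $(s,v)$ is not a separation pair, there is a $w$-$t$-path inside $G'$ avoiding $v$ and $s$. Concatenating these two paths at $w$ and shortcutting repeated vertices gives the desired $s$-$t$-walk, hence path, $r$ in $G'-v$.

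The second step builds the $K_4$ subdivision. Since $G'-s-t$ is connected, there is a path $x$ from $v$ to some internal vertex $y$ of $r$, with all internal vertices of $x$ outside $r$ and different from $s$ and $t$. The four branch vertices are $s$, $t$, $v$, $y$, with the following six connections:
\begin{itemize}
\item $s$–$v$ and $v$–$t$ are edges;
\item $s$–$y$ and $y$–$t$ are the two subpaths of $r$;
\item $v$–$y$ is the path $x$;
\item $s$–$t$ is the path $q$ outside $G'$.
\end{itemize}
These six paths are internally disjoint, so they form a subdivision of $K_4$. This contradicts the fact that $G$ is series-parallel, and therefore $v$ is the only internal vertex of $G'$.

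The main obstacle is the first step, namely making sure that $r$ really avoids $v$ and stays within $G'$. This is where the absence of transitive edges is used, through the fact that $(s,v)$ and $(v,t)$ are not separation pairs. A second point to check is that $x$ can be chosen internally disjoint from $r$: take a shortest path in $G'-s-t$ from $v$ to the vertex set of $r$ minus $\{s,t\}$.
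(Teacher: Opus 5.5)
Your proof is correct and follows essentially the paper's argument. Both proofs use that $(s,v)$ and $(v,t)$ are not separation pairs to get paths from another internal vertex to $s$ and to $t$ avoiding $v$, and combine them with an $s$-$t$-path outside $G'$ into a $K_4$ subdivision on $s$, $t$, $v$ and a fourth vertex of $G'$. The only difference is how that fourth branch vertex is chosen: the paper takes $u'$, the last vertex of $p_s$ on $p_t$ (with $u$ a neighbor of $v$), whereas you first build an $s$-$t$-path $r$ in $G'-v$ and attach $v$ to it by a shortest path $x$.
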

\begin{claimproof}
	Assume that there is a separation pair $(s,t)$ with a component $G'$ that contains a vertex $v$ that is connected to both $s$ and $t$ and to another vertex $u$. Since there are no transitive edges, neither $(s,v)$ nor $(v,t)$ can be a separation pair. Thus, there is a $u$-$s$-path $p_s$ that does not contain $v$ nor $t$ and a $u$-$t$-path $p_t$ that does not contain $v$ nor $s$.  Let $u'$ be the last vertex of $p_s$ on $p_t$.
	Let $p$ be an $s$-$t$-path through another component of $(s,t)$, i.e., $p$ contains no vertices of $G'$ except  $s$ and $t$. Then there is a subdivision of a $K_4$ with vertices $v,u',s,t$ and edges in $p,p_s,p_t$,  and the edges between $v$ and $s$, $u$, and $t$. See \cref{fig:bipolarOne}.
\end{claimproof}

\begin{claim}\label{claim:nonheavylooklike}
	The light components w.r.t.\ a separation pair $(s,t)$ look as follows:
	\begin{enumerate}
		\item a path of length two between the poles
		\item the poles $s$ and $t$, a vertex $t'$ incident to $t$ and an independent set $V_s$ of vertices incident to $t'$ and $s$. 
		\item the poles $s$ and $t$, a vertex $s'$ incident to $s$ and an independent set $V_t$ of vertices incident to $s'$ and $t$. 
	\end{enumerate}
\end{claim}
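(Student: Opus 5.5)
Let $G'$ be a light component w.r.t.\ $(s,t)$ and let $I$ denote its set of internal vertices. By lightness, every vertex of $I$ is adjacent to $s$ or to $t$. Write $V_s$ for the internal vertices adjacent to $s$ and $V_t$ for those adjacent to $t$, so that $I = V_s \cup V_t$. The plan is to first handle the case where these sets overlap, and then show that when they are disjoint, one side collapses to a single vertex and everything else hangs off it.

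\textbf{Step 1: overlapping case.} If some $v\in V_s\cap V_t$ exists, then \cref{claim:bipolarOne} says $v$ is the only internal vertex. Since $s$ and $t$ are not adjacent (no transitive edges), $G'$ is then the path $s,v,t$, which is case~1. From now on assume $V_s\cap V_t=\emptyset$. Both sets are nonempty, because $G'$ is connected and attached to both poles (biconnectivity gives a path from $s$ to $t$ inside $G'$).

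\textbf{Step 2: independence and the shape of the edges.} Suppose two vertices $a,b\in V_s$ are adjacent. Then $a$ and $b$ share the common neighbor $s$, and I would argue that $(s,a)$ or $(s,b)$ becomes a separation pair whose vertices are joined by an edge, contradicting the absence of transitive edges. An alternative route is to produce a $K_4$-subdivision: use the edge $ab$, the two edges to $s$, a path from one of $a,b$ to $t$ inside $G'$, and an $s$-$t$-path $q$ outside $G'$. The second route seems more robust, and it is analogous to the argument in \cref{claim:bipolarOne}. The same argument applies to $V_t$. Hence all edges among internal vertices run between $V_s$ and $V_t$, so $G'-s-t$ is a connected bipartite graph on $V_s\cup V_t$.

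\textbf{Step 3 (main obstacle): one side is a singleton and forms a star.} I must show that $G'-s-t$ is a star centered at a vertex of one side, with that side containing only the center. First, suppose $G'-s-t$ contains a path $a,b,a'$ with $a,a'\in V_s$ and $b\in V_t$. Then $(s,b)$ together with the $s$-$t$ route through $q$ yields the following configuration: $s,a,b$ and $s,a',b$ are two internally disjoint paths, $b$ is adjacent to $t$, and $t$ reaches $s$ via $q$. This alone is not a $K_4$-subdivision. If additionally $V_t$ contains a second vertex $b'$ (adjacent to some $a''$), I would locate a $K_4$-subdivision using the two disjoint $s$-$b$ connections, the edge $bt$, and a connection from $b'$ back toward $\{a,a'\}$ avoiding $b$, following the style of the proofs of \cref{claim:induced,claim:bipolarOne}.

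A cleaner way to reach the conclusion is to count non-transitive separation pairs. If both $|V_s|\ge 2$ and $|V_t|\ge 2$ and the bipartite graph is connected, then either the bipartite graph contains a cycle, or it contains a path $a,b,a',b'$. A cycle in the bipartite part together with the edges to $s$, $t$ and the outside path $q$ gives a $K_4$-subdivision. A path $a,b,a',b'$ together with $sa$, $sa'$, $bt$, $b't$ and $q$ also gives a $K_4$-subdivision on the branch vertices $s,a',b,t$. So one side, say $V_t=\{t'\}$ (or symmetrically $V_s=\{s'\}$), has exactly one vertex. Every vertex of $V_s$ is then adjacent to $t'$ by connectivity, which gives case~2 (or case~3 in the symmetric situation).

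The delicate point in Step 3 is checking that the branch paths of the $K_4$-subdivisions are genuinely internally disjoint. For the path $a,b,a',b'$, the branch vertices are $s,a',b,t$, and the six branch paths are
\begin{itemize}
\item $s$–$a'$ via the edge $sa'$,
\item $s$–$b$ via $s,a,b$,
\item $s$–$t$ via $q$,
\item $a'$–$b$ via the edge $a'b$,
\item $a'$–$t$ via $a',b',t$,
\item $b$–$t$ via the edge $bt$.
\end{itemize}
These are internally disjoint, so this case works. For the cycle case, I expect to pick two vertices of the cycle on opposite sides as branch vertices together with $s$ and $t$; this needs some care.
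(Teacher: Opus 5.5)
\textbf{Gap in Step 2.} Steps~1 and~3 follow the paper's argument: \cref{claim:bipolarOne} handles a common neighbour, and an alternating path on four vertices gives a $K_4$-subdivision on $s,a',b,t$. Step~2, however, does not prove that $V_s$ is independent. The $K_4$-subdivision you describe on branch vertices $s,a,b,t$ has only five of the six branch paths: $sa$, $sb$, $ab$, $q$, and one path from $a$ to $t$. There is no $b$--$t$ path. This cannot be fixed without using the no-transitive-edges hypothesis, because the claim is false when transitive edges are allowed. Take the path $s,a,c,t$, a vertex $b$ adjacent to $s$ and $a$, and a path $s,d,t$. The graph is biconnected and series-parallel. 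The component $\{a,b,c\}$ w.r.t.\ $(s,t)$ is light, and $a,b\in V_s$ are adjacent, which is exactly your configuration. The only thing violated is that $(s,a)$ is a separation pair joined by an edge. Your first route (``$(s,a)$ or $(s,b)$ becomes a separation pair'') has the right intuition, but it is asserted rather than argued.

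\textbf{How to close it.} Since $sa$ and $sb$ are edges and there are no transitive edges, neither $(s,a)$ nor $(s,b)$ is a separation pair. Hence there is a path $p_a$ from $a$ to $t$ in $G-s-b$ and a path $p_b$ from $b$ to $t$ in $G-s-a$. Both lie in $G'$, because they avoid $s$ and meet $t$ only at their end. Let $w$ be the first vertex of $p_a$ that lies on $p_b$; then $w\neq a,b$. Now $s,a,b,w$ are branch vertices of a $K_4$-subdivision with branch paths
\begin{itemize}
\item $sa$, $sb$, $ab$,
\item $p_a[a,w]$ and $p_b[b,w]$,
\item $q$ followed by $p_b$ traversed from $t$ back to $w$.
\end{itemize}
These are internally disjoint because $p_a[a,w)$ avoids $p_b$ entirely. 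This is the paper's argument. With it in place, Step~3 goes through.

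\textbf{Simplification in Step 3.} You do not need the cycle/path dichotomy, and the unfinished cycle case disappears. A connected bipartite graph with both sides of size at least two has a spanning tree that is not a star, so it contains an alternating path $a,b,a',b'$; this is what the paper's tree argument does. Your check of the six branch paths never uses that this path is induced.
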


\begin{figure}[htb]
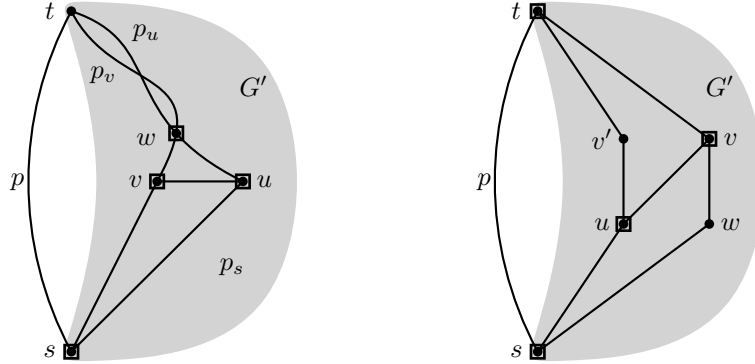

	\centering
	\begin{minipage}[b]{0.33\linewidth}
		\centering
		\includegraphics[page=10]{heavy}
	\end{minipage}\hfil
	\begin{minipage}[b]{0.3\linewidth}
		\centering
		\includegraphics[page=11]{heavy}
	\end{minipage}\hfil
	\caption{Subdivision of a $K_4$ used in the proof of \cref{claim:nonheavylooklike}}\label{fig:K4-claim6}
\end{figure}

\begin{claimproof} 
	Let $G'$ be a light component of some separation pair $(s,t)$.
	If there is a vertex $v$ incident to $s$ and $t$, then, by \cref{claim:bipolarOne}, $v$ is the only vertex of $G'$ other than the poles. So, if $G'$ contains more than one internal vertex, then each vertex is adjacent to $s$ or $t$, but not to both. 
	
	Let $u$ and $v$ be two vertices of $G'$ that are both adjacent to $s$.  Assume that $u$ and $v$ are adjacent. Since there are no transitive edges, neither $(s,u)$ nor $(s,v)$ can be a separation pair. So, there must be a path $p_u$ from $u$ to $t$ that contains neither $s$ nor $v$ and a path $p_v$ from $v$ to $t$ that contains neither $s$ nor $u$. Let $w$ be the first common vertex of $p_u$ and $p_v$. Since $(s,t)$ is a separation pair, there must be an $s$-$t$-path $p$ through some other component than $G'$. Thus, there would be a subdivision of a $K_4$ with vertices $s$, $u$, $v$, $w$ within $G'$ and $p$. See \cref{fig:K4-claim6} on the left. Thus, all neighbors of $s$ in $G'$ form an independent set. Analogously, all neighbors of $t$ in $G'$ form an independent set. 
	
	Assume now that the set $V_s$ of neighbors of $s$ and the set $V_t$ of neighbors of $t$ in $G'$ both contain at least two vertices.  Since $G'-s-t$ is connected, there must be a tree $T$ connecting $V_s$ and $V_t$. Let $u \in V_s \cup V_t$ be a vertex that is not a leaf in $T$. Such a vertex exists since $|V_s \cup V_t| > 2$. The degree of $u$ in $T$ is at least 2. Assume w.l.o.g. that $u\in V_s$.
    Not all neighbors of $u$ can be leaves since $u$ would be the only vertex in $V_s$ otherwise. Let $v$ and $v'$ be neighbors of $u$ such that the degree of $v$ in $T$ is at least two and let $w \neq u$ be another neighbor of $v$ in $T$. Then there is a subdivision of a $K_4$ with vertices $u$, $v$, $s$ and $t$  %
    contained in $G'$ and an $s$-$t$-path $p$ outside $G'$, a contradiction. See \cref{fig:K4-claim6} on the right. 
\end{claimproof}

\subsection{A Series-Parallel Graph Admitting a Grounded L-\scalebox{-1}[1]{L}-representation but No Grounded L-representation.}

\noL* \label{thm:noLA}

\begin{figure}[t]
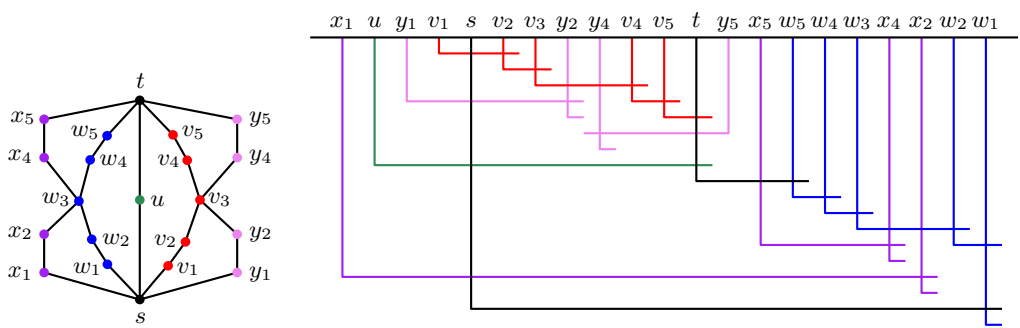

    \centering
    \includegraphics[page = 1]{noL.pdf} \hfil
    \includegraphics[page = 4]{noL.pdf}
    \caption{A series-parallel graph without transitive edges that admits a grounded \eLLe-representation but no grounded \eL-representation.}
    \label{fig:mirrored_needed2}
\end{figure}

In this section, we denote by $G$ the graph in \cref{fig:mirrored_needed2}. We prove that $G$ admits a grounded \eLLe-representation but it does not admit a grounded \eL-representation. The former is easy to observe in the figure. The section is devoted to proving the latter.  Let $X=\{x_1,x_2,x_4,x_5\}$, $Y=\{y_1,y_2,y_4,y_5\}$, $\Pi_w=\{w_1,w_2,w_3,w_4,w_5\}$, and $\Pi_v=\{v_1,v_2,v_3,v_4,v_5\}$.

Given a grounded 1-string representation of $G$, we denote by $\curve(v)$ the curve of a vertex~$v$. Here $\curve(v)$ does not have to be L-shaped. Let $\curve'(u)$ be the sub-curve of $\curve(u)$ from the point where $\curve(u)$ crosses $\curve(t)$ the point where it crosses $\curve(s)$.
Observe that $\curve' = \curve(u)'\cup \curve(s) \cup \curve(t)$ separates the area inside the drawing area (disk) into two regions, $R$ and~$\tilde{R}$.
Let $\curve'_v$ be the union of the curves $\curve(v_2),\dots,\curve(v_4)$ and the parts of $\curve(v_1)$ and $\curve(v_5)$ connecting $\curve(v_2)$ with $\curve(s)$ and $\curve(v_4)$ with $\curve(t)$. Let $\curve'_w$ be defined analogously. Since $\curve'_v$ and $\curve'_w$ intersects  $\curve'$ only in two end points, it follows that $\curve'_v$ and $\curve'_w$ are entirely contained in $R$ or $\tilde{R}$. Moreover, both connect $\curve(s)$ with $\curve(t)$ and the border. This implies that $\curve'_w$ has to be contained in $\tilde R$ if $\curve'_v$ is in $R$. We have the following lemma.

\begin{lem}
\label{le:splitinregions}
If $G$ admits a grounded 1-string representation, then it admits one where the curves of the vertices of $\Pi_v$ and $\Pi_w$ are inside $R$ and $\tilde{R}$, respectively, 
except for initial or final parts of the curves of $v_1, v_5, w_1,w_5$ (the vertices adjacent either to $s$ or $t$).
\end{lem}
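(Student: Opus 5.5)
The plan is to show that the lemma needs no modification of the given representation, only a suitable naming of the two regions. Fix a grounded 1-string representation of $G$. Rename $R$ and $\tilde R$ if necessary so that $\curve'_v$ lies in $R$. I then check, curve by curve, where the curves of $\Pi_v\cup\Pi_w$ can go. The only tool is the adjacency structure of $G$ read from \cref{fig:mirrored_needed2}: $v_1,\dots,v_5$ and $w_1,\dots,w_5$ are internal vertices of two $s$-$t$-paths, neither path is adjacent to $u$, and only $v_1,w_1$ are adjacent to $s$ and only $v_5,w_5$ to $t$.

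First I will make precise why $\curve'_v$ and $\curve'_w$ lie in different regions. The curve $\curve'$ consists of $\curve(s)$, $\curve(t)$ and the arc $\curve'(u)$; together with the boundary of the disk it cuts the disk into $R$ and $\tilde R$. Both $\curve'_v$ and $\curve'_w$ are connected sets that touch $\curve'$ only at a point of $\curve(s)$ and a point of $\curve(t)$. They also reach the disk boundary, at the anchors of $v_2,v_3,v_4$, respectively of $w_2,w_3,w_4$. Suppose both lay in $R$. Take inside $\curve'_v$ an arc from $\curve(s)$ to $\curve(t)$. This arc splits $R$ into two parts, and only one of them contains boundary points of the disk. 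A topological argument should then force $\curve'_w$ to cross $\curve'_v$: $\curve'_w$ must contain an $s$–$t$ connection together with a branch to the boundary, and this can only be realized in the part of $R$ not containing the boundary if it crosses the arc. The main step I expect to be delicate is making this Jordan-curve argument rigorous, since the 3-star must be handled much as in the proof of \cref{lem:fewHeavyNec}. Any crossing of $\curve'_v$ and $\curve'_w$ is a crossing of some $\curve(v_i)$ with some $\curve(w_j)$, which is impossible because no $v_i$ is adjacent to any $w_j$. Hence $\curve'_w\subseteq\tilde R$.

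Next I extend this from the subcurves to the full curves. For $i\in\{2,3,4\}$, $v_i$ is adjacent to none of $s,t,u$, so $\curve(v_i)$ is disjoint from $\curve'$. Since $\curve(v_i)$ is connected and meets $\curve'_v\subseteq R$, it lies entirely in $R$; likewise each $\curve(w_i)\subseteq\tilde R$. The curve $\curve(v_1)$ avoids $\curve(t)$ and $\curve(u)$, and by the 1-string property it crosses $\curve(s)$ exactly once. So $\curve(v_1)$ splits at that crossing into two pieces, each of which lies in a single region. The piece containing the part connecting $\curve(v_2)$ to $\curve(s)$ lies in $R$. The other piece is an initial or final part of $\curve(v_1)$, ending at the crossing with $\curve(s)$, and it is exactly the allowed exception. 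The same argument applies to $v_5$ with $t$, and to $w_1,w_5$ with $\tilde R$. This gives the statement for the original representation.
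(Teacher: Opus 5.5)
Your proposal is correct and follows the paper's argument. In both, $\curve'$ cuts the disk into $R$ and $\tilde R$, each of $\curve'_v,\curve'_w$ meets $\curve'$ only in one point of $\curve(s)$ and one of $\curve(t)$ and therefore lies in a single region, and two disjoint such 3-stars (reaching $\curve(s)$, $\curve(t)$ and the disk boundary) cannot share a region; your treatment of the leftover pieces of $v_1,v_5,w_1,w_5$ via the 1-string property is a step the paper leaves implicit. For the Jordan step, observe that the two disjoint $s$-$t$ arcs are nested in $R$, so by the $v$/$w$ symmetry you may assume the arc of $\curve'_w$ lies on the $\curve'(u)$-side of the arc of $\curve'_v$, which is exactly the case your sentence handles.
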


\begin{figure}[t]
    \centering
    \includegraphics[page = 16,scale=0.8]{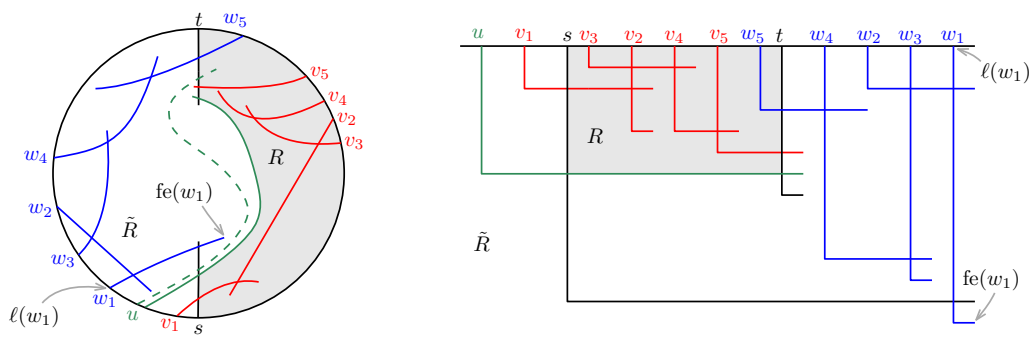}
    \caption{Illustration for the proof of \cref{le:splitinregions}.}
     \label{fig:circular}
\end{figure}

We define some notations. Given a grounded  \eL-representation,
the \emph{floating-end} of a vertex $v$, denoted by $\fe(v)$, is the point of $v$ that is the right endpoint of the horizontal line contained in $\curve(v)$. \cref{fig:circular} highlights $\ell(w_1)$ and $\fe(w_1)$. 

We denote by $X(\cdot)$ and $Y(\cdot)$ functions associating with a point in the plane its $X$- and $Y$-coordinate, respectively. The $Y$-coordinate below $\ell$ is considered positive, so for example in the right part of \cref{fig:circular} we have $Y(\fe(w_2))<Y(\fe(w_3))<Y(\fe(w_1))$. Observe that in a grounded \eL-representations we always have $X(\ell(v))<X(\fe(v))$.

In the rest of the section we mainly compare $X$-coordinates of anchors and $Y$-coordinates of floating-ends. Hence, in what follows, given two vertices $v$ and $v'$, for simplicity we write $\ell(v)<\ell(v')$ instead of $X(\ell(v))<X(\ell(v'))$ and  $\fe(v)<\fe(v')$ instead of $Y(\fe(v))<Y(\fe(v'))$. In \cref{fig:circular}, for example, we have $\ell(v_1)<\ell(s)$ and $\fe(v_2)<\fe(v_5)$. 

   \begin{figure}[htb]
    \centering
    \begin{minipage}[b]{0.3\linewidth}
		\centering
		\includegraphics[page = 5]{mirrored_needed-newR.pdf}
		\subcaption{$\ell(u) < \ell(s) < \ell(t)$}
		\label{fig:poles-cases-a}
	\end{minipage}
    \hfil
    \begin{minipage}[b]{0.3\linewidth}
		\centering
		\includegraphics[page = 8]{mirrored_needed-newR.pdf}
		\subcaption{$\ell(s) < \ell(u) < \ell(t)$}
		\label{fig:poles-cases-b}
	\end{minipage}
    \hfil
    \begin{minipage}[b]{0.3\linewidth}
		\centering
		\includegraphics[page = 6]{mirrored_needed-newR.pdf}
		\subcaption{$\ell(s) < \ell(t) < \ell(u)$}
		\label{fig:poles-cases-c}
	\end{minipage}
    \hfil
    
    \caption{The cases analyzed separately in 
    \cref{lem:spnoL}.}
    \label{fig:poles-cases}
\end{figure}

  \begin{figure}[htb]
    \centering
    \begin{minipage}[b]{0.32\linewidth}
		\centering
		\includegraphics[page = 5]{noL.pdf}
		\subcaption{$\ell(u) < \ell(s) < \ell(t)$}
		\label{fig:ulowers}
    \end{minipage} \hfil
     \begin{minipage}[b]{0.33\linewidth}
		\centering
		\includegraphics[page = 6]{noL.pdf}
		\subcaption{$\ell(s) < \ell(u) < \ell(t)$}
		\label{fig:ulowert}
    \end{minipage}\\[1em]
    \begin{minipage}[t]{0.28\linewidth}
		\centering
		\includegraphics[page = 10]{noL.pdf}
		\subcaption{$\ell(s) < \ell(v_1)$ and \dots}
		\label{fig:slowerv1}
    \end{minipage} \hfil
    \begin{minipage}[t]{0.35\linewidth}
		\centering
		\includegraphics[page = 8]{noL.pdf}
		\subcaption{$\ell(w_1) < \ell(v_1) < \ell(s)$ and \dots}
		\label{fig:w1lowers}
    \end{minipage} \hfil
    \begin{minipage}[t]{0.35\linewidth}
		\centering
		\includegraphics[page = 9]{noL.pdf}
		\subcaption{$\ell(v_1) < \ell(s) < \ell(w_1)$ and \dots}
		\label{fig:slowerw1}
    \end{minipage} \\[1em]
    \centerline{$\dots \ell(s) <\ell(t) < \ell(u).$}
    
    \caption{Illustrations for \cref{lem:spnoL}.}
    \label{fig:schematicCases}
\end{figure}

\begin{lem}\label{lem:spnoL}
    Graph $G$ has no grounded \eL-representation.
\end{lem}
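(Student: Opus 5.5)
We argue by contradiction: assume $G$ has a grounded \eL-representation. Every grounded \eL-representation is in particular a grounded 1-string representation, since two \eL-shapes cross at most once. Hence \cref{le:splitinregions} applies. We may therefore assume that the curves of $\Pi_v$ lie in $R$ and those of $\Pi_w$ lie in $\tilde R$, apart from the initial or final pieces of $v_1,v_5,w_1,w_5$. The proof then splits into three cases according to the left-to-right order of the anchors $\ell(u)$, $\ell(s)$, $\ell(t)$ on $\ell$, exactly as in \cref{fig:poles-cases}. By the left–right symmetry of $G$ (swapping $s$ with $t$ and the indices $1,\dots,5$ with $5,\dots,1$), the case where $\ell(u)$ lies to the right of both $\ell(s)$ and $\ell(t)$ is equivalent to the case where it lies to the left. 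So, after possibly renaming, we only need to treat $\ell(u)<\ell(s)<\ell(t)$, $\ell(s)<\ell(u)<\ell(t)$, and $\ell(s)<\ell(t)<\ell(u)$. Note, however, that this symmetry does not preserve the \eL-shape; the mirror image of an \eL{} is an \Le. So the third case still has to be handled on its own, which is why the paper lists it separately.

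Within each case, the \eL-shapes of $s$, $t$, $u$ fix the shape of the bounded region (shaded in \cref{fig:3casesNoL}). The key tool is an order argument on anchors and floating ends. If $a$ and $b$ are two \eL{}s with $\ell(a)<\ell(b)$, then they cross if and only if $\fe(a)>\fe(b)$ (the horizontal part of $a$ lies below the bend of $b$) and $X(\fe(a))\ge X(\ell(b))$. Non-adjacency therefore forces one of two comparisons in the opposite direction.

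Using this, I would first pin down where $v_1,v_5,w_1,w_5$ can be anchored relative to $s$, $t$, and each other. This is the subcase split suggested by \cref{fig:slowerv1,fig:w1lowers,fig:slowerw1}: $\ell(s)<\ell(v_1)$, $\ell(w_1)<\ell(v_1)<\ell(s)$, or $\ell(v_1)<\ell(s)<\ell(w_1)$. Next, for the paths $v_1\dots v_5$ and $w_1\dots w_5$, I would use that the vertices are consecutive neighbours only, with $v_2,v_3,v_4$ adjacent to neither pole. Together with the placement inside $R$ or $\tilde R$, this determines the relative orders of their anchors and floating ends up to a few possibilities. The aim is to show that, in each case, the representation of $G$ without $X$ and $Y$ is essentially unique or falls into a short list of configurations.

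Finally, for each surviving configuration, I would exhibit an element of $X$ or $Y$ that cannot be added. Each $x_i$ (resp.\ $y_i$) has a prescribed small neighbourhood, for instance one path vertex together with a pole or together with $u$. Its \eL{} would have to reach both of its neighbours while avoiding an enclosing \eL{}. In the region of the drawing where it would have to sit, its anchor is trapped between consecutive anchors, and any horizontal segment long enough to reach one neighbour must cross a non-neighbour. This is the situation shown in \cref{fig:exampleNoLcompletion}. Because the $x_i$ and $y_i$ attach on both the $v$-side and the $w$-side, at least one of them fails in each configuration.

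The main obstacle is the case analysis in the middle step: keeping the enumeration of relative orders of anchors and floating ends finite and convincing without drowning in subcases. To manage this, I would reduce repeatedly with the two-\eL{} crossing criterion above, and I would use the 1-string region separation from \cref{le:splitinregions} to discard at once every configuration in which the $v$-path and the $w$-path interleave.
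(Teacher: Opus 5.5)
There is a genuine gap. Your outline follows the skeleton of the paper's proof: contradiction, \cref{le:splitinregions}, w.l.o.g.\ $\ell(s)<\ell(t)$, the three positions of $\ell(u)$, subcases on $v_1$ and $w_1$, and finally a vertex of $X\cup Y$ that cannot be drawn. But it stops exactly where the proof starts. The lemma \emph{is} the case analysis, and your closing sentence (``because the $x_i$ and $y_i$ attach on both the $v$-side and the $w$-side, at least one of them fails'') restates the claim rather than proving it. You never say which gadget vertex fails in which configuration, or why.

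What is missing are the concrete trapping arguments. Take the case $\ell(u)<\ell(s)<\ell(t)$. Here $R$ is the box between the vertical parts of $\curve(s)$ and $\curve(t)$, above the horizontal part of $\curve(u)$. This forces $\ell(s)<\ell(v_i)<\ell(t)$ for $i=2,3,4$, $\ell(u)<\ell(v_1)<\ell(s)$, and $\ell(s)<\ell(v_5)<\ell(t)$.

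One then needs $\ell(v_3)<\ell(v_5)$. Otherwise $\curve(v_5)$ passes below $\curve(v_3)$ and $\curve(v_2)$, and $\curve(v_1)$ cannot reach $\curve(v_2)$ without crossing $\curve(v_5)$. Since every vertical line strictly between $\ell(s)$ and $\ell(t)$ meets $\curve'_v$, the vertex $y_5$ (adjacent only to $t$ and $y_4$) must be anchored between $\ell(v_5)$ and $\ell(t)$. Meanwhile $\curve(y_4)\subset R$ must reach $\curve(v_3)$, which is anchored left of $v_5$. So $\curve(y_4)$ and $\curve(y_5)$ cannot meet.

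The case $\ell(s)<\ell(u)<\ell(t)$ is analogous. The case $\ell(s)<\ell(t)<\ell(u)$ needs a nested split:
\begin{itemize}
\item If $\ell(s)<\ell(v_1)$, then $y_1,y_2$ fail.
\item Otherwise $\ell(u)<\ell(w_5)$, and one splits on $\ell(w_1)<\ell(s)$ versus $\ell(w_1)>\ell(s)$; in both, $x_1$ and its partner at $w_3$ fail.
\end{itemize}
Your subcase list, read off the figure captions, belongs only to this third case. It is also not exhaustive as stated, since it omits $\ell(v_1)<\ell(w_1)<\ell(s)$.

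Second, the one concrete tool you state is backwards. The paper's convention is that $Y$ increases downward. If $\ell(a)<\ell(b)$, the only possible crossing is between the horizontal part of $\curve(a)$ and the vertical part of $\curve(b)$. So the curves cross if and only if $\fe(a)<\fe(b)$ (the horizontal part of $a$ lies \emph{above} the bend of $b$) and $X(\fe(a))\ge X(\ell(b))$. A horizontal part lying below the bend of $b$ passes underneath $\curve(b)$. Used as you wrote it, your criterion reverses every inequality the proof relies on. Examples are ``$\curve(v_5)$ meets $\curve(t)$ but not $\curve(v_3)$, hence $\fe(v_5)>\fe(v_3)$'' and ``$\fe(t)<\fe(w_5)$''.

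Finally, the reduction to three cases comes from relabelling by the symmetry of $G$ that exchanges $s$ and $t$. This gives $\ell(s)<\ell(t)$ and keeps every curve \eL-shaped; no mirroring is involved. For \eL-representations, ``$\ell(u)$ right of both poles'' is not equivalent to ``$\ell(u)$ left of both poles'', which is exactly why all three cases are needed. Your final list of cases is correct, but the justification you give for it is not.
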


\begin{proof}
   Suppose that $G$ admits a grounded \eL-representation.  The
   half-plane below the horizontal line $\ell$ is divided into two
   regions by the curve $\curve(s)$, the curve $\curve(t)$, and the part
   of the curve $\curve(u)$ between $\curve(s)$ and $\curve(t)$.  Let $R$ be the
   bounded region, and let $\tilde R$ be the unbounded region.

Observe that by \cref{le:splitinregions} we can assume that $\eL'_v$, i.e., the curves of the vertices of~$\Pi_v$, except initial or final parts of $v_1$ or $v_5$, are inside $R$ and that $\eL'_w$ is 
inside $\tilde R$. 
We assume w.l.o.g.\ that $\ell(s)<\ell(t)$, and we analyze the
possible positions of $\ell(u)$.

\smallskip \noindent
\textbf{Case} $\mathbf{\ell(u)<\ell(s)<\ell(t).}$ See \cref{fig:ulowers} for a schematic view. Since $\eL'_v \subset R$, we have that $\ell(s)<\ell(v_i)<\ell(t)$ ($i\in [2,4]$) implying $\ell(u)<\ell(v_1)<\ell(s)$  (otherwise $\curve(v_1)$ does not cross $\curve(s)$) and  $\ell(s)<\ell(v_5)<\ell(t)$. Moreover, $\ell(v_3) < \ell(v_5)$. Otherwise, since $\curve(v_5)$ has to intersect $\curve(t)$ but not $\curve(v_3)$, we get $\fe(v_5) > \fe(v_3)$. Since $\curve(v_2)$  has to intersect $\curve(v_3)$ but not $\curve(v_5)$, we would obtain $\ell(v_5) < \ell(v_2) < \ell(v_3)$ and $\fe(v_5) > \fe(v_2)$. But now, $\curve(v_1)$ cannot intersect $\curve(s)$ and $\curve(v_2)$. 

Consider now $y_4$ and $y_5$. 
We have that $\curve(y_4) \subset R$ since $\curve(y_4)$ crosses $\curve(v_3)$ but not~$\eL'_v$. 
$\curve(y_5)$ has to cross $\curve(t)$ and $\curve(y_4)$ without crossing $\curve(s)$. This implies in particular $\ell(s) < \ell(y_5) < \ell(t)$. On the other hand $\curve(y_5)$ must not cross  $\curve'_v$. Since 
any vertical line between $\ell(s)$ and $\ell(t)$ crosses $\curve'_v$ (see the dashed vertical line in the figure), this implies that $\ell(v_5) < \ell(y_5) < \ell(t)$. But since $\ell(v_3) < \ell(v_5)$ and $\curve(y_4)$ intersects $\curve(v_3)$ this also implies that $\curve(y_4)$ cannot intersect $\curve(y_5)$. 

\smallskip \noindent
\textbf{Case} $\mathbf{\ell(s)<\ell(u)<\ell(t).}$ See \cref{fig:ulowert} for a schematic view. %
This case is very similar to the previous case. 

Since $\curve'_v \subset R$ and since $\curve'_v \cap \curve(u) = \emptyset$ it follows that  $\ell(u)<\ell(v_i)<\ell(t)$  for $i\in [1,5]$. Since $\curve(y_4)$ has to cross $\curve(v_3)$ but not $L_v'$, we obtain again $\curve(y_4) \subset R$ and, thus, $\ell(u) < \ell(y_4) < \ell(t)$. 
For the same reasons as in the previous case it is now impossible that $\curve(y_5)$ crosses $\curve(t)$ and $\curve(y_4)$ without crossing $\curve(s)$ or $\curve'_v$.

\smallskip \noindent
\textbf{Case} $\mathbf{\ell(s)<\ell(t)<\ell(u).}$
    Since $\eL'_v$ contains a curve connecting $\curve(s)$ and $\curve(t)$ within $R$, we obtain again that $\ell(s)<\ell(v_i)<\ell(t)$ for $i\in [2,4]$.  We distinguish whether $\ell(v_1)<\ell(s)$ or $\ell(s)<\ell(v_1)$.

Consider the case $\ell(s)<\ell(v_1)$. See \cref{fig:slowerv1} for a schematic view. We have that $\curve(y_2)$ has to cross $\curve(s)$ or $\eL'_v$ in order to cross $\curve(y_1)$, similarly to the arguments for $\curve(y_5)$ and $\curve(y_4)$ in the previous two cases. 

Suppose now $\ell(v_1)<\ell(s)$. %
We have $\ell(u)<\ell(w_5)$ and $\fe(t)<\fe(w_5)$, otherwise $\curve(w_5)$ cannot cross $\curve(t)$. 

We first prove that  $\ell(w_1)<\ell(s)$ is not possible. See \cref{fig:w1lowers} for a schematic view. Otherwise, there is $p\in [2,4]$ such that $\ell(w_p)<\ell(s)$ and $\ell(w_5)\le \ell(w_{p+1})$ (where the equality holds if $p=4$). %
Then, we have that $\ell(w_1)<\ell(x_1)<\ell(s)$ otherwise $\curve(x_1)$ cannot cross $\curve(s)$ without crossing $\curve(w_i)$ for some $i\in [1,5]$. However, $\ell(w_1)<\ell(x_3)<\ell(s)$ is not possible otherwise $\curve(x_3)$ does not cross $\curve(w_3)$ without crossing $\curve(w_1)$. Hence, $\curve(x_1)$ and $\curve(x_3)$ cannot cross. 

So $\ell(w_1)>\ell(s)$. See \cref{fig:slowerw1} for a schematic view.
Since $\curve(w_1)$ has to intersect $\curve(s)$ without intersecting $\eL'_v$, $\curve(t)$, nor $\curve(u)$ and since $\fe(s)<\fe(u)$, it follows that $\ell(u)<\ell(w_1)$ and $\fe(s)<\fe(w_1)$. Also, $\ell(w_5)<\ell(w_1)$, otherwise $\curve(w_5)$ crosses $\curve(s)$. 
 
Since $\eL'_w$ connects $\curve(t)$ and $\curve(s)$, it follows that $\ell(u)<\ell(w_3) < \ell(w_1)$. 
Now consider $x_1$ and $x_2$. Since $\curve(x_1)$ has to cross $\curve(s)$ without crossing $\eL'_v$, we have $\ell(w_2)<\ell(x_1)$. Since $\curve(x_2)$ has to cross only $\curve(w_3)$ and $\curve(x_1)$, we have $\ell(x_2)<\ell(w_1)$. But then $\curve(x_2)$ and $\curve(x_1)$ cannot cross.
\end{proof}

\section{A Linear-Time Testing Algorithm}

Here, we provide the details for testing whether a series-parallel graph without transitive edges is outerstring and, thus, admits a grounded \eLLe-representation. We use the \emph{decomposition tree} of a series-parallel graph, a vertex-labeled ordered rooted tree $T$ defined as follows. The leaves of a decomposition tree are labeled $Q$, and all other vertices are labeled $S$ or $P$ such that no two adjacent vertices have the same label. 
The decomposition tree of a biconnected series-parallel graph can be constructed as follows.
If $G$ does not contain a separation pair, then $G$ is a triangle (Otherwise $G$ is 3-connected and, thus, contains a subdivision of a~$K_4$). In this case, consider any two vertices of $G$ as the poles $s$ and $t$.
Otherwise, start with an arbitrary separation pair $(s,t)$ and let $G_1,\dots,G_k$ be the components w.r.t.\ $(s,t)$. The poles of each $G_1,\dots,G_k$ are $s$ and $t$. The root of the decomposition tree is a P-node with $k$ children. If $s$ and $t$ are connected, then one child is a Q-node. All (other) children are S-nodes. The S-node representing $G_i$ for some $i \in \{1,\dots,k\}$ has a child for each biconnected component of~$G_i$ in the order in which they appear in $G_i$. The nodes are labeled Q or P, depending on whether $G_i$ is a single edge or not. The poles are $s$, $t$, and the cut vertices of~$G_i$.  Now continue recursively with the biconnected graphs, which correspond to P-nodes.   

\algo*\label{cor:algoA}
\begin{proof}
	Since a grounded \eLLe-representation implies a grounded representation with segments which in turn implies that the graph is outerstring, the four properties are fulfilled if and only if the graph is at most 2-heavy. By \cref{claim:nonheavylooklike}, a component is heavy if and only if a longest path between the poles has length at least four. Proceeding first bottom-up and then top-down in the decomposition tree, we can compute the length of a longest path as follows. 
	
	Let $G$ be a biconnected series-parallel graph, and let $T$ be its decomposition tree. We assume that $T$ is rooted at a Q-node.
	Bottom-up, we compute for every node $\nu$ of $T$ the length $\ell(\nu)$ of a longest path between the poles of the graph $G(\nu)$ represented by the subtree $T(\nu)$ of $T$ rooted $\nu$. If $\nu$ is a Q-node, then $\ell(\nu)=1$. Assume now that $\nu$ has children $\mu_1,\dots,\mu_k$. If $\nu$ is an S-node, then $\ell(\nu) = \sum_{i=1}^k\ell(\mu_i)$ and if $\nu$ is a P-node, then $\ell(\nu) = \max_{i=1}^k\ell(\mu_i)$.
	
	Next, we compute top down for each node $\nu$ the length $\ell'(\nu)$ of a longest path in the parent component $G - G(\nu)$ of $\nu$, i.e., the graph obtained from $G$ by removing all edges and vertices of $G(\nu)$ except the poles of $G(\nu)$.  
	If $\nu$ is the only child of the root, then $\ell'(\nu)=1$. Let now $\nu$ by some inner node of $T$ and assume that we have already computed $\ell'(\nu)$. Let $\mu_i$, $i=1,\dots,k$ be the children of $\nu$. If $\nu$ is an S-node, then $\ell'(\mu_i) = \ell(\nu) + \ell'(\nu) - \ell(\mu_i)$. If $\nu$ is a P-node, then let $\mu_j$ be a child of $\nu$ for some $j \in \{1,\dots,k\}$ with $\ell(\nu) = \ell(\mu_j)$. If $i \neq j$, then set $\ell'(\mu_i) = \max \{ \ell'(\nu),\ell(\nu) \}$. Otherwise,  $\ell'(\mu_i) = \max \{ \ell'(\nu),\max_{j' \in \{1,\dots,k\} \setminus \{j\}}\ell(\mu_{j'}) \}$.
	
	Let now $(s,t)$ be a separation pair of $G$ that has more than two components. Then $s$ and $t$ are the poles of a P-node. So for each P-node $\nu$ with children $\mu_1,\dots,\mu_k$ we have to check whether at most two among $\ell(\mu_1),\dots,\ell(\mu_k),\ell'(\nu)$ are greater than three. This is linear in the sum of the degrees in $T$ and, thus, linear in the size of $G$.
\end{proof}
\end{document}